\def\tsat{\textsc{3-SAT}\xspace}
\def\sat{\textsc{SAT}\xspace}
\def\is{\textsc{Max Independent Set}\xspace}
\def\sc{\textsc{Min Set Cover}\xspace}
\def\col{\textsc{Min Coloring}\xspace}
\def\ids{\textsc{Min Independent Dominating Set}\xspace}
\def\mmvc{\textsc{Max Minimal Vertex Cover}\xspace}
\def\atsp{\textsc{Min ATSP}\xspace}
\def\kipath{\textsc{$k$-Induced Path}\xspace}
\def\mim{\textsc{Max Induced Matching}\xspace}
\def\grundy{\textsc{Max Grundy Coloring}\xspace}
\def\mip{\textsc{Max Induced Path}\xspace}
\def\mif{\textsc{Max Induced Forest}\xspace}
\def\mit{\textsc{Max Induced Tree}\xspace}
\def\opt{\mathrm{opt}}
\let\le\leqslant
\let\ge\geqslant
\begin{document}

\title{Time-Approximation Trade-offs for Inapproximable Problems}

\author{\'Edouard Bonnet\inst{1}, Michael Lampis\inst{2}, Vangelis Th.
Paschos\inst{2,3}}

\institute{Hungarian Academy of Sciences, \email{bonnet.edouard@sztaki.mta.hu} \and LAMSADE, Université Paris
Dauphine, \email{michail.lampis@dauphine.fr,paschos@lamsade.dauphine.fr} \and Institut Universitaire de France}

\maketitle

\protect\thispagestyle{plain}

\begin{abstract}

In this paper we focus on problems which do not admit a constant-factor
approximation in polynomial time and explore how quickly their approximability
improves as the allowed running time is gradually increased from polynomial to
(sub-)exponential.  

We tackle a number of problems: For \ids, \mip, \textsc{Forest} and
\textsc{Tree}, for any $r(n)$, a simple, known scheme gives an approximation
ratio of $r$ in time roughly $r^{n/r}$. We show that, for most values of $r$,
if this running time could be significantly improved the ETH would fail.  For
\mmvc we give a non-trivial $\sqrt{r}$-approximation in time $2^{n/{r}}$.  We
match this with a similarly tight result.  We also give a $\log
r$-approximation for \atsp in time $2^{n/r}$ and an $r$-approximation for
\grundy in time $r^{n/r}$. 

Furthermore, we show that \sc exhibits a curious behavior in this
super-polynomial setting: for any $\delta>0$ it admits an
$m^\delta$-approximation, where $m$ is the number of sets, in just
quasi-polynomial time. We observe that if such ratios could be achieved in
polynomial time, the ETH or the Projection Games Conjecture would fail.

\end{abstract}

\section{Introduction}

One of the central questions in combinatorial optimization is how to deal
efficiently with NP-hard problems, with approximation algorithms being one of
the most widely accepted approaches. Unfortunately, for many optimization
problems, even approximation has turned out to be hard to achieve in polynomial
time.  This has naturally led to a more recent turn towards super-polynomial
and sub-exponential time approximation algorithms.  The goal of this paper is
to contribute to a systematization of this line of research, while adding new
positive and negative results for some well-known optimization problems.

For many of the most paradigmatic NP-hard optimization problems the best
polynomial-time approximation algorithm is known (under standard assumptions)
to be the trivial algorithm.  In the super-polynomial time domain, these
problems exhibit two distinct types of behavior.  On the one hand, APX-complete
problems, such as \textsc{MAX-3SAT}, have often been shown to display a ``sharp
jump'' in their approximability.  In other words, the only way to obtain any
improvement in the approximation ratios for such problems is to accept a fully
exponential running time, unless the Exponential Time Hypothesis (ETH) is false~\cite{MR10}.

A second, more interesting, type of behavior is displayed on the other hand by
problems which are traditionally thought to be ``very inapproximable'', such as
\textsc{Clique}.  For such problems it is sometimes possible to improve upon
the (bad) approximation ratios achievable in polynomial time with algorithms
running only in \emph{sub-exponential} time. In this paper, we concentrate on
such ``hard'' problems and begin to sketch out the spectrum of trade-offs
between time and approximation that can be achieved for them.

On the algorithmic side, the goal of this paper is to design
\emph{time-approximation trade-off schemes}.  By this, we mean an algorithm
which, when given an instance of size $n$ and an (arbitrary) approximation
ratio $r>1$ as a target, produces an $r$-approximate solution in time $T(n,r)$.
The question we want to answer is what is the best function $T(n,r)$, for each
particular value of $r$.  Put more abstractly, we want to sketch out, as
accurately as possible, the Pareto curve that describes the best possible
relation between worst-case approximation ratio and running time for each
particular problem. For several of the problems we examine the best known
trade-off algorithm is some simple variation of brute-force search in
appropriately sized sets. For some others, we present trade-off schemes with
much better performance, using ideas from exponential-time and parameterized
algorithms, as well as polynomial-time approximation.

Are the trade-off schemes we present optimal? A naive way to answer this
question could be to look at an extreme, already solved case: set $r$ to a
value that makes the running time polynomial and observe that the approximation
ratios of our algorithms generally match (or come close to) the best-known
polynomial-time approximation ratios.  However, this observation does not alone
imply satisfactorily the optimality of a trade-off scheme: it leaves open the
possibility that much better performance can be achieved when $r$ is restricted
to a different range of values.  Thus, the second, perhaps more interesting,
direction of this paper is to provide lower bound results (almost) matching
several of our algorithms \emph{for any point in the trade-off curve}.  For a
number of problems, these results show that the known schemes are (essentially)
the best possible algorithms, everywhere in the domain between polynomial and
exponential running time. We stress that we obtain these much stronger
\emph{sub-exponential inapproximability} results relying only on standard,
appropriately applied, PCP machinery, as well as the ETH.

\noindent\textbf{Previous work:} Moderately exponential and sub-exponential
approximation algorithms are relatively new topics, but most of the standard
graph problems have already been considered in the trade-off setting of this
paper.  For \is and \col an $r$-approximation in time~$c^{\nicefrac{n}{r}}$ was given by
Bourgeois et al.~\cite{BourgeoisEP09,BEP09b}.  For \sc, a $\log
r$-approximation in time~$c^{\nicefrac{n}{r}}$  and an $r$-approximation in time~$c^{\nicefrac{m}{r}}$,
where $n,m$ are the number of elements and sets respectively, were given by
Cygan, Kowalik and Wykurz~\cite{CyganKW09,BEP09}.  For \ids an $r$-approximation
in $c^{\nicefrac{n\log r}{r}}$ is given in~\cite{BCEP13}. An algorithm with similar
performance is given for \textsc{Bandwidth} in~\cite{CP10} and for
\textsc{Capacitated Dominating Set} in~\cite{CPW11}. In all the results above,~$c$ denotes some appropriate constant.

On the hardness side, the direct inspiration of this paper is the recent work
of Chalermsook, Laekhanukit and Nanongkai~\cite{ChalermsookLN13} where the
following was proved.
\begin{theorem} \label{thm:CLN}~\cite{ChalermsookLN13} For all $\varepsilon>0$,
for all sufficiently large $r=O(n^{\nicefrac{1}{2}-\varepsilon})$, if there
exists an $r$-approximation for \is running in
$2^{\nicefrac{n^{1-\varepsilon}}{r^{1+\varepsilon}}}$ then there exists a randomized
sub-exponential algorithm for \tsat.
\end{theorem}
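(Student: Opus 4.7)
The plan is a standard ``PCP + FGLSS + randomized graph product'' reduction, calibrated so that the product length and the starting PCP dimensions line up with the claimed $r$-approximation budget. The goal is to turn a \tsat instance on $n$ variables into a gap-\is instance of carefully controlled size $N$ and multiplicative independence-number gap $r$, so that the hypothesized $r$-approximation algorithm running in time $2^{N^{1-\varepsilon}/r^{1+\varepsilon}}$ separates YES from NO and hence decides \tsat in randomized sub-exponential time.

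First, I would apply a near-linear-size, constant-query PCP with constant gap (e.g.\ the Moshkovitz--Raz theorem) to a \tsat formula $\varphi$ on $n$ variables; this yields a verifier with proof length $n^{1+o(1)}$, perfect completeness, and soundness bounded away from $1$ by an absolute constant. The standard FGLSS reduction then produces a graph $G_0$ on $N_0 = n^{1+o(1)}$ vertices whose independence number is at least $a$ if $\varphi$ is satisfiable and at most $a/c$ otherwise, for some constant $c > 1$.

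Second, I would amplify this constant gap up to $r$ by taking a randomized (or derandomized, expander-based) graph product of $G_0$ of length $t = \Theta(\log r)$, in the spirit of Berman--Schnitger and Zuckerman. Standard constructions produce a graph $G$ on $N = N_0 \cdot r^{O(1)} = n^{1+o(1)} \cdot r^{O(1)}$ vertices whose independence number in the YES/NO cases differs by a factor of $r$. Running the assumed $r$-approximation on $G$ then decides whether $\varphi$ is satisfiable; plugging $N$ and $r = O(n^{1/2-\varepsilon})$ into the assumed running time gives, after a short computation, a randomized algorithm for \tsat running in time $2^{n^{1-\varepsilon'}}$ for some $\varepsilon' > 0$.

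The main obstacle is the parameter tuning inside the graph-product step: the vertex blow-up has to track $r^{O(1)}$ as tightly as possible, since any slack goes straight into the exponent of the implied \tsat algorithm. This is exactly what forces the restriction $r = O(n^{1/2 - \varepsilon})$: once $r$ pushes past $\sqrt{N}$ in an FGLSS-plus-product construction, the gap overwhelms the vertex count and there is no regime of parameters for which the hypothesized approximation algorithm beats the trivial $2^{O(n)}$ bound for \tsat. Getting the constants in the derandomized product to align, and using a PCP whose soundness is sharp enough to survive $\Theta(\log r)$ amplification rounds without degrading $N$ by more than an $r^{O(\varepsilon)}$ factor, is the technical heart of the reduction.
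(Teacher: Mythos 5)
The statement you are proving is not established in this paper at all; it is quoted verbatim from Chalermsook, Laekhanukit and Nanongkai~\cite{ChalermsookLN13} and used as a black box (in Theorem~\ref{thm:forest} and Theorem~\ref{thm:mmvc-hardness}). So there is no in-paper proof to compare your attempt against. That said, your outline does follow the same broad contour as the argument in~\cite{ChalermsookLN13}: near-linear PCP, FGLSS graph, then gap amplification by a randomized graph product.

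The substantive gap is in your amplification step. A graph product of length $t=\Theta(\log r)$ on a graph with $N_0$ vertices has $N_0^{t}=N_0^{\Theta(\log r)}$ vertices, which is quasi-polynomially larger than what you need; nothing about a ``standard construction'' automatically brings this down to $N_0\cdot r^{O(1)}$. The technique that actually achieves this (following Zuckerman's approach for \textsc{Clique}, and used in~\cite{ChalermsookLN13}) is to \emph{sub-sample} the product: choose only $N_0\cdot r^{O(1)}$ random tuples out of the $N_0^{t}$ possible ones, and argue via a concentration bound that, with high probability, the sub-sampled graph still has a large independent set in the YES case and no medium-size independent set in the NO case. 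Your sketch asserts the vertex bound rather than explaining where it comes from, which is precisely the ``technical heart'' you defer. Relatedly, your parenthetical escape hatch to a derandomized, expander-based product does not go through with the parameters you need; the known derandomizations lose exactly where you cannot afford slack, which is \emph{why} the conclusion of the theorem is only a \emph{randomized} sub-exponential algorithm for \tsat, and why the authors of the present paper take pains to give direct deterministic reductions for \ids (Theorem~\ref{thm:mids-hardness}) and \mip (Theorem~\ref{thm:kipath-hardness}) that sidestep this randomized machinery entirely.
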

Theorem~\ref{thm:CLN} essentially showed that the very simple approximation
scheme of~\cite{BourgeoisEP09} is probably ``optimal'', up to an arbitrarily
small constant in the second exponent, for a large range of values of $r$ (not
just for polynomial time).  The hardness results we present in this paper
follow the same spirit and in fact also rely on the technique of appropriately
combining PCP machinery with the ETH, as was done in~\cite{ChalermsookLN13}.
To the best of our knowledge, \is and \mim (for which similar results are given
in \cite{ChalermsookLN13}) are the only problems for which the trade-off curve
has been so accurately bounded.  The only other problem for which the
optimality of a trade-off scheme has been investigated is \sc.  For this
problem the work of Moshkovitz~\cite{M12} and Dinur and Steurer~\cite{DS14}
showed that there is a constant $c>0$ such that $\log r$-approximating \sc
requires time $2^{{(\nicefrac{n}{r})}^c}$. It is not yet known if this constant
$c$ can be brought arbitrarily close to 1.

\noindent\textbf{Summary of results:} In this paper we want to give upper and
lower bound results for trade-off schemes that match as well as the algorithm
of~\cite{BourgeoisEP09} and Theorem~\ref{thm:CLN} do for \is; we achieve this
for several problems.
%
%More specifically, we begin from the simple observation that for most graph
%problems where the solution is a set of vertices an $r$-approximation can be
%obtained in $c^{\nicefrac{n\log r}{r}}$. We then show the following:
\begin{itemize}
\item For \ids, there is no $r$-approximation in
$2^{\nicefrac{n^{1-\varepsilon}}{r^{1+\varepsilon}}}$ for any $r$, unless the
\emph{deterministic} ETH fails. This result is achieved with a direct reduction
from a quasi-linear PCP and is stronger than the corresponding result for \is
(Theorem~\ref{thm:CLN}) in that the reduction is deterministic and works for
all $r$.
\item For \mip, there is no $r$-approximation in $2^{o(\nicefrac{n}{r})}$ for any $r<n$,
unless the deterministic ETH fails. This is shown with a direct reduction from
\tsat, which gives a sharper running time lower bound.  For \mit and
\textsc{Forest} we show hardness results similar to Theorem~\ref{thm:CLN} by
reducing from \is.
\item For \mmvc we give a scheme that returns a $\sqrt{r}$-ap\-p\-ro\-x\-i\-ma\-ti\-on in
time $c^{\nicefrac{n}{r}}$, for any $r>1$. We complement this with a reduction from \is
which establishes that a $\sqrt{r}$-approximation in time
$2^{\nicefrac{n^{1-\varepsilon}}{r^{1+\varepsilon}}}$ (for any $r$) would disprove the randomized
ETH.
\item For \atsp we adapt the classical $\log n$-approximation into a $\log
r$-approximation in $c^{\nicefrac{n}{r}}$. For \grundy we give a simple $r$-approximation
in $c^{\nicefrac{n}{r}}$. For both problems membership in APX is still an open problem.
\item Finally, we consider \sc. Its approximability  in terms of $m$  is poorly
understood, even in polynomial time. With a simple refinement of an argument
given in~\cite{nelson07} we show how to obtain for any $\delta>0$ an
$m^\delta$-approximation in quasi-polynomial time
$2^{\log^{\nicefrac{(1-\delta)}{\delta}}n}$. We also observe that, if the ETH and the
Projection Games Conjecture~\cite{M12} are true, there exists $c>0$ such that
$m^c$-approximation cannot be achieved in polynomial time. This would imply
that the approximability of \sc changes dramatically from polynomial to
quasi-polynomial time. The only other problem which we know to exhibit this
behavior is \textsc{Graph Pricing}~\cite{ChalermsookLN13}.
%
% where the best result is a $\sqrt{m}$-approximation, and the best hardness
%$2^{\log^{1-o(1)} m}$~\cite{nelson07}. Intriguingly, the
%$\sqrt{m}$-approximation algorithm uses the $r$-approximation in $c^{m/r}$ time
%of~\cite{CyganKW09} as a sub-routine.  With a simple refinement of this
%algorithm's argument we show how to obtain for any $\delta>0$ an
%$m^\delta$-approximation in quasi-polynomial time $2^{\log^{(1-\delta)/\delta}
%n}$.  We also observe that, if the ETH and the Projection Games Conjecture~\cite{M12} are true, there exists $c>0$ such that $m^c$-approximation cannot be
%achieved in polynomial time. This would imply that \sc is a rare example of a
%problem where going from polynomial to quasi-polynomial time dramatically
%improves the best possible ratio. The only other problem which we know to
%exhibit this behavior is \textsc{Graph Pricing}~\cite{ChalermsookLN13}.
%
\end{itemize}

\section{Preliminaries and Baseline Results}

\subsubsection*{Algorithms}

In this paper we consider time-approximation trade-off schemes.  Such a scheme
is an algorithm that, given an input of size $n$ and a parameter~$r$, produces
an $r$-approximate solution (that is, a solution guaranteed to be at most a
factor $r$ away from optimal) in time~$T(n,r)$.  Sometimes we will overload
notation and allow trade-off schemes to have an approximation ratio that is
some other function of~$r$, if this makes the function~$T(n,r)$ simpler.  We
begin with an easy, generic, such scheme, that simply checks all subsets of a
certain size.
\begin{theorem} \label{thm:generic}
Let $\Pi$ be an optimization problem on graphs, for which the solution is a 
set of vertices and feasibility of a solution can be verified in polynomial time. 
Suppose that $\Pi$ satisfies one of the following sets of conditions:
\begin{enumerate}
\item The objective is $\mathrm{min}$ and some solution can be produced in polynomial time.
\item The objective is $\mathrm{max}$ and for any feasible solution $S$ there exists $u\in S$ such that
$S\setminus\{u\}$ is also feasible (weak monotonicity).
\end{enumerate} 
Then, for any $r>1$ (that may depend on the order $n$ of the input) there 
exists an $r$-approximation for $\Pi$ running in time $O^*((er)^{\nicefrac{n}{r}})$.
\end{theorem}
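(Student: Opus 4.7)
The plan is a generic brute-force enumeration. Set $k^\star := \lceil n/r \rceil$ and, for each $k$ from $0$ to $k^\star$, list every $k$-element vertex subset and test feasibility in polynomial time, keeping the best feasible subset found. In the minimization case, output this subset, or fall back to the guaranteed polynomial-time feasible solution $S_0$ if no small subset is feasible; in the maximization case, output the largest feasible subset encountered.

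For correctness I would split on whether $\opt \leq k^\star$. If so, the enumeration already considers an optimal solution and the algorithm returns one. Otherwise $\opt > k^\star \geq n/r$. For minimization, any feasible vertex subset has size at most $n$, so the fallback $S_0$ satisfies $|S_0|/\opt \leq n/(n/r) = r$. For maximization, I would fix an optimal solution $S^\star$ and apply weak monotonicity repeatedly, peeling off one vertex at a time, to produce a feasible subset $T \subseteq S^\star$ with $|T| = k^\star$; since every $k^\star$-subset is enumerated, this $T$ is considered and the output has size at least $k^\star \geq \opt/r$.

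For the running time, the cost is dominated by the enumeration, and I would bound it using the standard estimate $\binom{n}{k} \leq (en/k)^k$:
\[
\sum_{k=0}^{k^\star} \binom{n}{k} \;\leq\; (k^\star+1)\binom{n}{k^\star} \;\leq\; (k^\star+1)\left(\frac{en}{k^\star}\right)^{k^\star} \;\leq\; (k^\star+1)(er)^{k^\star}.
\]
Since $k^\star \leq n/r + 1$, this is $(er)^{n/r}$ up to a factor polynomial in $n$ and $er$, hence $O^*((er)^{n/r})$; polynomial-time feasibility checks contribute only another polynomial factor.

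There is no genuinely hard step. The two mild points worth verifying carefully are that iterated weak monotonicity really yields a feasible subset of $S^\star$ of each size from $|S^\star|$ down to $0$ (so the target size $k^\star$ is indeed attainable), and that the rounding in $k^\star = \lceil n/r \rceil$ perturbs the exponent only by a factor absorbed by $O^*(\cdot)$.
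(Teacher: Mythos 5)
Your proof is correct and follows essentially the same approach as the paper's: enumerate all vertex subsets of size up to roughly $n/r$, use the polynomial-time fallback solution when $\opt$ is large in the minimization case, and peel down a fixed optimal solution via weak monotonicity to a feasible subset of the target size in the maximization case. You are in fact somewhat more careful than the paper: you make the ceiling $k^\star=\lceil n/r\rceil$ explicit, you spell out the iterated peeling argument, and you give the binomial estimate in full rather than asserting it.
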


\begin{proof}
The algorithm simply tries all sets of vertices of size up to~$\nicefrac{n}{r}$. These are
at most $\nicefrac{n}{r} {n \choose r} =  O^*((er)^{\nicefrac{n}{r}})$. Each set is checked for
feasibility and the best feasible set is picked. In the case of minimization
problems, either we will find the optimal solution, or all solutions contain at
least~$\nicefrac{n}{r}$ vertices, so an arbitrary solution (which can be produced in
polynomial time) is an $r$-approximation. In the case of maximization, the weak
monotonicity condition ensures that there always exists a feasible solution of
size at most~$\nicefrac{n}{r}$.~\qed
\end{proof}

Because of Theorem~\ref{thm:generic}, we will treat this kind of qualitative
trade-off performance ($r$ approximation in time exponential in $\nicefrac{n\log r}{r}$) as
a ``baseline''. It is, however, not trivial if this performance can be achieved
for other types of graph problems (e.g. ordering problems).  Let us also note
that, for maximization problems that satisfy strong monotonicity (all subsets
of a feasible solution are feasible) the running time of Theorem~\ref{thm:generic} can be improved to $O^*(2^{\nicefrac{n}{r}})$~\cite{BourgeoisEP09}. 

\subsubsection*{Hardness}

The Exponential Time Hypothesis (ETH)~\cite{impagliazzo01} is the assumption
that there is no $2^{o(n)}$-algorithm that decides \tsat instances of size $n$.
All of our hardness results rely on the ETH or the (stronger) randomized ETH,
which states the same for randomized algorithms.

For most of our hardness results we also make use of known quasi-linear PCP
constructions. Such constructions reduce \tsat instances of size $n$ into CSPs
with size $n\log^{O(1)} n$, so that there is a gap between satisfiable and
unsatisfiable instances. Assuming the ETH, these constructions give a problem
that cannot be approximated in time $2^{o(\nicefrac{n}{\log^{O(1)}n)}}$ which we often
prefer to write as $2^{n^{1-\varepsilon}}$, though this makes the lower bound
slightly weaker. We note that, because of the poly-logarithmic factor added by
even the most efficient known PCPs, current techniques are often unable to
distinguish between whether the optimal running time for $r$-approximating a
problem is, say $2^{\nicefrac{n}{r}}$ or $r^{\nicefrac{n}{r}}$. The existence of linear PCPs, which at
the moment is open, could help further our understanding in this direction. To
make the sections of this paper more independent, we will cite the PCP theorems
we use as needed.

\section{\ids}

The result of this section is a reduction showing that for \ids, no trade-off
scheme can significantly beat the baseline performance of Theorem~\ref{thm:generic}, which qualitatively matches the best known scheme for this
problem~\cite{BCEP13}.  Thus, in a sense \ids is an ``inapproximable'' problem
in sub-exponential time.  Interestingly, \ids was among the first problems to
be shown to be inapproximable in both polynomial time~\cite{H93a} and FPT time~\cite{DFMR08}.

To show our hardness result, we will need an almost linear PCP construction
with perfect completeness.  Such a PCP was given by Dinur~\cite{dinur07}.
%, using\cite{bensasson06}.  
%We reproduce the lemma below.
\begin{lemma}[\cite{dinur07}, Lemma 8.3.]
There exist constants $c_1, c_2 > 0$ and a polynomial time reduction that transforms any SAT instance $\phi$ of size $n$ into a constraint graph $G= \langle (V,E), \Sigma, \mathcal C \rangle$ such that
\begin{itemize}
\item $|V|+|E| \leqslant n (\log n)^{c_1}$ and $\Sigma$ is of constant size.
\item If $\phi$ is satisfiable, then UNSAT$(G)=0$.
\item If $\phi$ is not satisfiable, then UNSAT$(G) \geqslant \nicefrac{1}{(\log n)^{c_2}}$.
\end{itemize}
\end{lemma}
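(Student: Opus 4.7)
The lemma is Dinur's quasi-linear PCP theorem with perfect completeness; the plan is to sketch her gap-amplification proof while tracking the size blow-up. The starting point is a direct reduction from SAT to a constraint graph on the Boolean alphabet: introduce a vertex per variable and per clause, and add an edge (constraint) between each clause-vertex and its three variable-vertices that locally checks consistency with a satisfying assignment of the clause. Standard manipulations produce a constraint graph with $|V|+|E|=O(n)$ such that $\phi$ is satisfiable if and only if $\mathrm{UNSAT}(G)=0$, and otherwise $\mathrm{UNSAT}(G) \geq 1/|E|$.

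The core of the proof is Dinur's amplification lemma: given a constraint graph of gap $\delta$ over a constant alphabet $\Sigma$, produce one of gap $\min(2\delta,\delta_0)$ over an alphabet of the same fixed size, whose size is only a constant factor larger. I would construct it as the composition of three operations. First, a preprocessing step that replaces $G$ by a constant-degree expander on essentially the same vertex set (via union with an expander and degree reduction), at an $O(1)$ cost in size. Second, for a large constant $t$, replace each edge by the set of length-$t$ walks; spectral expansion then amplifies $\delta$ to $\Theta(\sqrt{t}\,\delta)$ while the alphabet grows to $\Sigma^{d^{\,t}}$. Third, compose with a constant-size assignment tester with perfect completeness, bringing the alphabet back to a fixed size at another $O(1)$ cost.

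Perfect completeness is preserved throughout: in each of the three operations a globally satisfying assignment for the input extends canonically to one for the output. For powering, each vertex is assigned the tuple of values on its length-$t$ neighborhood; for composition, one uses a perfectly complete assignment tester and the honest encoding of the input assignment. Hence if $\phi$ is satisfiable, the final constraint graph still satisfies $\mathrm{UNSAT}=0$, giving the second bullet of the statement.

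The main obstacle, and the source of the $(\log n)^{c_1}$ factor, is controlling the cumulative size after the iterations needed to raise the gap from $1/n$ to $1/(\log n)^{c_2}$. A naive application of the amplification lemma multiplies the size by a constant $C$ per round, so the $\Theta(\log n)$ rounds required to reach a \emph{constant} gap would yield $n \cdot C^{\Theta(\log n)} = n^{O(1)}$ total size, merely polynomial. Quasi-linearity is recovered via two refinements. First, the target gap is $1/\mathrm{polylog}(n)$ rather than a fixed constant, so only $O(\log\log n)$ rounds of amplification are needed once the initial reduction has been boosted to gap $1/\mathrm{polylog}(n)$. Second, the initial reduction and the inner assignment tester are themselves built from short PCPs of proximity whose proof length is quasi-linear, which keeps the per-round multiplicative blow-up confined to an overall polylogarithmic factor. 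Tracking these constants carefully, and choosing the walk-length $t$ and the expander parameters so that the per-round gap gain outruns the per-round size cost, is the technically delicate heart of Dinur's argument and is what the cited Lemma~8.3 packages for us.
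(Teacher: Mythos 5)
The lemma is cited from Dinur (Lemma 8.3 of the gap-amplification paper) and is used as a black box; the present paper gives no proof of it, so there is no in-paper proof to compare yours against.

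That said, your sketch has a genuine gap exactly where the lemma's content lies, namely the $n(\log n)^{c_1}$ size bound. The sentence ``only $O(\log\log n)$ rounds of amplification are needed once the initial reduction has been boosted to gap $1/\mathrm{polylog}(n)$'' is circular: if the gap has already been boosted to $1/\mathrm{polylog}(n)$, the target of the lemma is already met and no further rounds are needed; if instead one starts from the gap $\Theta(1/n)$ produced by the SAT-to-constraint-graph encoding of your first paragraph, then reaching $1/\mathrm{polylog}(n)$ by per-round doubling requires $\log n - O(\log\log n) = \Theta(\log n)$ rounds, and by your own size accounting this yields $n\cdot C^{\Theta(\log n)} = n^{\Theta(1)}$, merely polynomial. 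Your second ``refinement'' (quasi-linear PCPs of proximity) names the right ingredient, but as written it does not repair the first: the number of amplification rounds, and hence the overall size, remains unexplained. The actual route to Lemma 8.3 is structurally different: one begins from the Ben-Sasson--Sudan quasi-linear-size PCP, which already has constant soundness but a large alphabet, and the $1/\mathrm{polylog}(n)$ soundness loss in the lemma's statement is the price paid by composing with an inner verifier to shrink the alphabet to constant size; there is no $\Theta(\log n)$-round amplification cascade from soundness $1/n$. Your discussion of perfect completeness being preserved by powering and by a perfectly complete inner assignment tester is correct, and that is indeed the property the surrounding text explicitly relies on, but the size analysis as written does not establish the quasi-linear bound that the lemma asserts.
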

Let us recall the relevant definitions from~\cite{dinur07}. A constraint graph
is a CSP whose variables are the vertices of $G$ and take values over $\Sigma$.
All constraints have arity 2 and correspond to the edges of $E$; with each
constraint $C_e$ we associate a set of satisfying assignments from $\Sigma^2$.
UNSAT$(G)$ is the fraction of unsatisfied constraints that correspond to the
optimal assignment to $V$. Observe that we only need here a PCP theorem where
UNSAT$(G)$ is at least inverse poly-logarithmic in $n$ (rather than constant).
The important property we need for our reduction is perfect completeness (that
is, UNSAT$(G)=0$ in the YES case).
\begin{theorem}\label{thm:mids-hardness}
Under ETH, for any $\varepsilon>0$ and $r \leqslant n$, an $r$-approximation for \ids cannot take time $O^*(2^{\nicefrac{n^{1-\varepsilon}}{r^{1+\varepsilon}}})$.
\end{theorem}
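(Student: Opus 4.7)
The plan is to give a direct, deterministic reduction from SAT via Dinur's quasi-linear PCP (the cited Lemma) to \ids, producing an \ids instance with multiplicative optimum-gap at least $r$. Given a SAT formula $\phi$ of size $n_0$, applying the PCP yields a constraint graph $G=\langle(V,E),\Sigma,\mathcal{C}\rangle$ with $|V|+|E|\leqslant n_0(\log n_0)^{c_1}$, constant alphabet $\Sigma$, perfect completeness in the YES case, and $\mathrm{UNSAT}(G)\geqslant \delta:=1/(\log n_0)^{c_2}$ in the NO case. I will assume in addition that $G$ has constant maximum degree $d$; this is guaranteed by the standard expander-replacement step of Dinur's amplification and does not affect size or gap up to constants.

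From $G$ I build the \ids instance $H$ as follows. For each variable $v\in V$, introduce a ``cloud'' clique $K_v=\{(v,\sigma):\sigma\in\Sigma\}$ and an independent ``dummy'' set $D_v$ of $K$ vertices, each joined by an edge to every vertex of $K_v$ but non-adjacent to one another. For each edge $uv\in E$ and each pair $(\sigma,\tau)\in\Sigma^2$ falsifying $C_{uv}$, add the ``violation'' edge $(u,\sigma)(v,\tau)$. The key structural property is that any independent dominating set $I$ of $H$ contributes, per variable $v$, \emph{either} exactly one vertex of $K_v$ (which, via the clique and the bipartite join, dominates the rest of $K_v\cup D_v$) \emph{or} all $K$ vertices of $D_v$ (since dummies are pairwise non-adjacent and their only neighbors lie in $K_v$, none of which is then in $I$). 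Thus the per-variable contribution to $|I|$ is $1$ or $K$.

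In the YES case, taking $(v,A(v))$ for a satisfying assignment $A$ is independent (all violation edges are absent among chosen vertices) and dominating, so $\mathrm{IDS}(H)=|V|$. In the NO case, let $S\subseteq V$ index the variables whose dummies are chosen. The cloud vertices chosen for $V\setminus S$ form a partial assignment satisfying every constraint of $E[V\setminus S]$; extending it arbitrarily to $S$ violates only constraints incident to $S$, hence at most $d|S|$ of them. Combined with $\mathrm{UNSAT}(G)\geqslant\delta$ and $|E|=\Theta(|V|)$ (from constant degree), this forces $|S|\geqslant\Omega(\delta|V|)$, so $\mathrm{IDS}(H)\geqslant|V|+(K-1)\cdot\Omega(\delta|V|)$. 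Choosing $K=\Theta(r/\delta)=\Theta\!\left(r(\log n_0)^{c_2}\right)$ now yields a gap of at least $r$ between the two cases while keeping $|V(H)|=O\!\left(n_0\, r\,(\log n_0)^{c_1+c_2}\right)$.

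The theorem follows: writing $n:=|V(H)|$, an $r$-approximation for \ids running in $O^*(2^{n^{1-\varepsilon}/r^{1+\varepsilon}})$ would decide $\phi$ in time $2^{n_0^{1-\varepsilon}(\log n_0)^{O(1)}/r^{2\varepsilon}}=2^{o(n_0)}$, contradicting the (deterministic) ETH for every $r\leqslant n$. The main technical work will be verifying the ``$1$ or $K$'' dichotomy rigorously and the bound $|S|=\Omega(\delta|V|)$; the latter depends crucially on the constant-degree property of $G$, without which the polylogarithmic slack in the exponent would not close.
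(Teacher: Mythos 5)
Your construction is correct and genuinely different from the paper's. You attach the large gadgets (the dummy independent sets $D_v$ of size $K=\Theta(r/\delta)$) to the \emph{variables} of the constraint graph and encode the constraints by FGLSS-style ``violation'' edges between cloud vertices, so that leaving a variable unassigned costs $K$ vertices. The paper instead attaches the large gadgets (independent sets $I_e$ and $I_{e,(i,j)}$ of size $r$) to the \emph{edges} of the constraint graph, with adjacencies arranged so that each violated constraint forces its size-$r$ independent set to be taken wholesale. The two approaches yield the same qualitative bound, and your ``$1$ or $K$'' dichotomy argument is clean and correct.

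The essential trade-off between the two is the constant-degree assumption. By charging the gap to edges, the paper needs only the raw count of violated constraints ($\geqslant \mathrm{UNSAT}(G)\cdot|E|$), so it uses Dinur's Lemma~8.3 exactly as stated and never touches the degree of $G$. By charging the gap to vertices, you must translate ``many violated edges'' into ``many vertices incident to a violated edge,'' and for that you need $\Delta(G)=O(1)$; without it an adversary could concentrate all the violated constraints on a few high-degree vertices and escape with a tiny dummy cost. You flag this dependence explicitly, and it is indeed available (Dinur's preprocessing step makes the constraint graph $d$-regular expanding for constant $d$), but it is an extra property that must be pulled out of the PCP machinery rather than read off the cited lemma. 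Your cost $K=\Theta(r/\delta)$ is also a $\mathrm{polylog}$ factor larger than the paper's per-gadget cost $r$, which makes $n$ slightly bigger, but this is harmless since the slack is absorbed by the $\varepsilon$ in the final exponent, exactly as in the paper. Overall: a valid, arguably more transparent route that buys a simpler gadget at the price of an explicit degree hypothesis on the PCP output.
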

\begin{proof}
Let $G= \langle (V,E), \Sigma, \mathcal C \rangle$ the constraint graph obtained from any SAT formula $\phi$, applying the above lemma.
Let $s=|\Sigma|$, $n=|V|$ and $m=|E|$. 
We define an instance $G'=(V',E')$ of \ids in the following way.
For each vertex $v \in V$ and $a \in \Sigma$, we add a vertex $w_{v,a}$ in $V'$.
For each $v$, the $s$ vertices $w_{v,1}, w_{v,2}, \ldots, w_{v,s}$ are pairwise linked in $G'$ together with a dummy vertex $w_{v,0}$ and form a clique denoted by $C_v$. 
The idea would naturally be that taking $w_{v,a}$ in the independent dominating set corresponds to coloring $v$ by $a$.
For each edge $e=uv \in E$, and for each satisfying assignment $(i,j) \in C_e$ we add an independent set $I_{e,(i,j)}$ of $r$ vertices in $V'$, we link $w_{u,i}$ to all the vertices of the independent sets $I_{e,(i',j')}$ where $i' \in \Sigma \setminus \{i\}$ (and $j' \in \Sigma$), and we link $w_{v,j}$ to all the vertices of the independent sets $I_{e,(i',j')}$ where $(i',j) \in C_e$.
We finally add, for each edge $e=uv$, an independent set $I_e$ of $r$ vertices, and we link $w_{u,i}$ to all the vertices of $I_e$ if there is a pair $(i,j) \in C_e$ for some $j \in \Sigma$. 

If~$\phi$ is satisfiable, then UNSAT$(G)=0$, so there is a coloring $c:V
\rightarrow \Sigma$ satisfying all the edges.  Thus, $\bigcup_{v \in V}
\{w_{v,c(v)}\}$ is an independent dominating set of size~$n$.  It is
independent since there is no edge between~$w_{v,a}$ and~$w_{v',a'}$ whenever
$v \neq v'$.  It dominates~$\bigcup_{v \in V} C_v$ since one vertex is taken
per clique.  It also dominates~$I_e$ for every edge~$e$, by construction.  We
finally have to show that all the independent sets~$I_{uv,(i,j)}$ are
dominated.  If $c(u) \neq i$, then~$I_{uv,(i,j)}$ is dominated by~$w_{u,c(u)}$
(since $(c(u),c(v)) \in C_e$).  We now assume that $c(u) = i$.  Then~$I_{uv,(i,j)}$ is dominated by~$w_{v,c(v)}$, since $(c(u),c(v)) \in C_e$.

If~$\phi$ is not satisfiable, then UNSAT$(G) \geqslant \nicefrac{1}{(\log n)^{c_2}}$.
Any independent dominating set~$S$ has to take one vertex per clique~$C_v$ (to dominate the dummy vertex~$w_{v,0}$).
Let~$A$ be $S \cap \bigcup_{v \in V} C_v$, and let $c: V \rightarrow \Sigma$ be the coloring corresponding to~$A$.
Coloring~$c$ does not satisfy at least~$\nicefrac{m}{(\log n)^{c_2}}$ edges.
Let $E'' \subseteq E$ be the set of unsatisfied edges.
For each edge $e=uv \in E''$, let us show that at least one independent set of the form~$I_{uv,(i,j)}$ is not dominated by~$A$.
We may first observe that~$I_{uv,(i,j)}$ can only be dominated by~$w_{u,c(u)}$ or by~$w_{v,c(v)}$.
If there is no pair $(c(u),j') \in C_e$ for any~$j'$, then~$I_{e}$ is not dominated by construction.
If there is a pair $(c(u),j') \in C_e$ for some~$j'$, then~$I_{e,(c(u),j')}$ is not dominated by~$w_{u,c(u)}$ by construction, and is not dominated by~$w_{v,c(v)}$ since $(c(u),c(v)) \notin C_e$.

The only way of dominating those independent sets is to add to the solution all the vertices composing them, so a minimum independent dominating set is of size at least $n+\nicefrac{rm}{(\log n)^{c_2}} \geqslant \nicefrac{rn (\log n)^{c_1}}{(\log n)^{c_2}}=r'n$ setting $r'=\nicefrac{r (\log n)^{c_1}}{(\log n)^{c_2}}$. 

An $r'$-approximation for \ids can therefore decide the satisfiability of $\phi$.
The number of vertices in the instance of \ids is $n'=|V'| \leqslant (s+1)n+r(ms^2+1) \leqslant n(s+2+rs^2 (\log n)^{c_1})$.
So, for any $\varepsilon > 0$, if the $r'$-approximation algorithm for \ids runs in time $O^*(2^{\nicefrac{n'^{1-\varepsilon}}{r'^{1+\varepsilon}}})$, it contradicts ETH.~\qed
\end{proof}

\section{\mmvc}

In this section we deal with the \mmvc problem, which is the dual of \ids
(which is also known as \textsc{Minimum Maximal Independent Set}).
Interestingly, this turns out to be (so far) the only problem for which its
time-approximation trade-off curve can be well-determined, while being far from
the baseline performance of Theorem~\ref{thm:generic}. To show this result we
first present an approximation scheme that relies on a classic idea from
parameterized complexity: the exploitation of a small vertex cover.
\begin{theorem}\label{thm:mmvc}
For any $r$ such that $1<r\le \sqrt{n}$, \mmvc is $r$-approximable in time $O^*(2^{\nicefrac{n}{r^2}})$.
\end{theorem}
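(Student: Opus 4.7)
\noindent\emph{Proof plan.} The plan is to combine an exact procedure that exploits a small vertex cover with a polynomial-time approximation that takes over when no such cover exists. Set the threshold $k=\lceil n/r^{2}\rceil$. First I would run the standard branching algorithm for \vc, which in time $O^{*}(1.2738^{k})\subseteq O^{*}(2^{n/r^{2}})$ either returns a vertex cover $C$ with $|C|\le k$ or certifies that the minimum vertex cover size exceeds~$k$.

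If such a $C$ is found, I would exploit the fact that $V\setminus C$ is an independent set, so that every maximal independent set $I$ of $G$ is completely determined by its intersection $I_{1}:=I\cap C$: maximality forces $I\setminus C=\{v\in V\setminus C:N(v)\cap I_{1}=\emptyset\}$. Enumerating all $2^{|C|}\le 2^{n/r^{2}}$ subsets $I_{1}\subseteq C$, checking that $I_{1}$ is independent and that the induced candidate $I=I_{1}\cup(V\setminus C\setminus N(I_{1}))$ is actually maximal (equivalently, every vertex of $C\setminus I_{1}$ has a neighbor in $I$), and keeping the smallest such $|I|$, I would recover the exact value of \mmvc, since \mmvc\ equals $n$ minus the minimum maximal independent set.

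In the complementary case the minimum vertex cover exceeds $n/r^{2}$, so a maximum matching $M$ has $|M|>n/(2r^{2})$, and the plan is to produce, in polynomial time, a minimal vertex cover of size $\Omega(n/r)$; combined with the trivial bound $\opt\le n$, such a cover immediately gives the desired $r$-approximation. I would start from the vertex cover $V(M)$ of size $2|M|$, greedily discard only the vertices that are not needed, and then incorporate as many vertices of $V\setminus V(M)$ as possible through local swap/replacement rules that preserve both the covering property and the minimality. The hard part will be precisely this step: a naive matching-based minimal cover only guarantees size $\Omega(n/r^{2})$, which would yield only an $r^{2}$-approximation; recovering the missing factor of~$r$ requires a careful exploitation of the bipartite interaction between $V(M)$ and $V\setminus V(M)$, for instance by keeping both endpoints of matching edges whose endpoints have private neighbors in $V\setminus V(M)$.
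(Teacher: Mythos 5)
Your proposal diverges from the paper's proof and has a genuine gap in its second case.

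Your first case is fine: if a vertex cover $C$ of size at most $n/r^2$ exists, every maximal independent set $I$ is determined by $I\cap C$, so $2^{|C|}$ enumeration recovers the minimum maximal independent set, hence the exact \mmvc value $n-\min|I|$. This part is correct, though the paper never needs an exact sub-case at all.

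The real problem is your second case, and you partly sense it yourself. If the minimum vertex cover exceeds $n/r^2$, you want to \emph{construct} a minimal vertex cover of size $\Omega(n/r)$ and compare it against the trivial bound $\opt\le n$. But such a cover need not exist: take a clique $K_t$ with $t\approx 2n/r^2$ together with $n-t$ isolated vertices. The minimum vertex cover is $t-1>n/r^2$, so you are in this case, yet \emph{every} minimal vertex cover has size exactly $t-1\approx 2n/r^2$, which is far below $n/r$ once $r$ is large. Your target (``a minimal VC of size $\Omega(n/r)$'') is therefore unattainable in general, and no amount of swap/replacement rules can fix a goal that is information-theoretically false. (The algorithm ``output any minimal cover'' happens to be optimal on that example, which shows that the right quantity to compare against is $\opt$, not $n$; but you have no argument tying your constructed cover to $\opt$.)

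The paper's decomposition is different in a way that matters. It splits on the size of a maximal matching $M$ with threshold $n/r$ (not $n/r^2$). When $|M|\ge n/r$, any minimal vertex cover has size at least $|M|\ge n/r\ge\opt/r$, so that case is trivial. When $|M|<n/r$, the set $V(M)$ has fewer than $2n/r$ vertices -- too many to enumerate subsets of directly -- and here the key idea you are missing appears: partition $V(M)$ into $r$ blocks $V_1,\dots,V_r$ of at most $2n/r^2$ vertices each, and for each block enumerate only the independent sets inside that single block (at most $2^{n/r^2}$ of them, since $G[V_i]$ contains a perfect matching). For each such $S\subseteq V_i$, extend it with all non-neighbors in the unmatched (hence independent) set $L$, and take a minimal vertex cover avoiding the resulting independent set $S'$, which has size at least $|N(S')|$. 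The optimal minimal cover misses an independent set $R=R_L\cup\bigcup_i R_i$, and since $N$ is additive over parts of an independent set, pigeonhole gives some block $i$ with $|N(R_i\cup R_L)|\ge\opt/r$; your enumeration tries $S=R_i$ and hence $S'\supseteq R_i\cup R_L$. This block-pigeonhole argument is what lets one enumerate only $2^{n/r^2}$ candidates while still arguing against $\opt$ rather than against $n$, and it is the ingredient your plan lacks.
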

\begin{proof}
Our $r$-approximation algorithm begins by calculating a maximal matching
$M$ of the input graph.  If $|M|\ge \nicefrac{n}{r}$
then the algorithm simply outputs any arbitrary minimal vertex cover of $G$.  The solution, being a valid
vertex cover, must have size at least $|M|\ge \nicefrac{n}{r}$, and is therefore an
$r$-approximation.  

Otherwise, we partition the edges of $M$ into $r$ equal-sized groups arbitrarily. 
Let $V_i, 1\le i \le r$ be the set of vertices matched by the edges in group $i$.
By the bound on the size of $M$ we have that $|V_i|\le \nicefrac{2n}{r^2}$. 
We use $L$ to denote the set of vertices unmatched by $M$. 
Note that $L$ is of course an independent set.

The basic building block of our algorithm is a procedure which, given an
independent set $I$, builds a minimal vertex cover of $G$ that does not contain
any vertices of $I$. 
This can be done in polynomial time by first selecting $V\setminus I$ as a vertex cover of $G$, and then repeatedly removing from the cover redundant vertices one by one, until the solution is minimal. 
It is worthy of note here that this procedure guarantees the construction of a minimal vertex cover with size at least $|N(I)|$, where $N(I)$ is the set of vertices with a neighbor in $I$.

The algorithm now proceeds as follows: for each $i\in\{1,\ldots,r\}$ we iterate
through all sets $S\subset V_i$ such that $S$ is an independent set. 
For each such $S$ we initially build the set $S':= S \cup (L\setminus N(S))$. 
In words, we add to $S$ all its non-neighbors from $L$ to obtain $S'$, which is thus also an independent set. 
The algorithm then builds a minimal vertex cover of size at least $|N(S')|$ using the procedure of the previous paragraph. 
In the end we select the largest of the covers produced in this way.

The algorithm has the claimed running time.
The number of independent sets contained in $V_i$ is at most $2^{\nicefrac{n}{r^2}}$, since $G[V_i]$ has at most $\nicefrac{2n}{r^2}$ vertices and contains a perfect matching. 
Everything else takes polynomial time.

Let us therefore check the approximation ratio. 
Fix an optimal solution and let $R_i, i\in\{1,\ldots,r\}$ be the set of vertices of $V_i$ \emph{not} selected by this solution. 
Also, let $R_L$ be the vertices of $L$ not selected by the solution. 
Observe that $R:=R_L\cup \bigcup_{1\le i\le r} R_i$ is an independent set, and the solution has size $\opt=|N(R)|$, because all vertices of the solution must have an unselected neighbor.

Observe now that there must exist an $i\in\{1,\ldots,r\}$ such that $|N(R_i\cup R_L)| \ge \nicefrac{|N(R)|}{r}$. 
This is a consequence of the fact that for any two  sets $I_1,I_2$ such that $I_1\cup I_2$ is independent we have $N(I_1\cup I_2) = N(I_1)\cup N(I_2)$. 
Now, since the algorithm iterated through all independent sets in $V_i$, it must
have tried the set $S:=R_i$. 
From this it built the independent set $S':=R_i \cup (L\setminus N(R_i))$. 
Observe that $S' \supseteq R_i\cup R_L$, because $R_L$ does not contain any neighbors of $R_i$. 
It follows that $|N(S')| \ge |N(R_i\cup R_L)|$. Since the solution produced has size at least $|N(S')|$ we get the promised approximation ratio.~\qed
\end{proof}

The corresponding hardness result consists of a reduction from the \is instances constructed in Theorem~\ref{thm:CLN}.
\begin{theorem}\label{thm:mmvc-hardness}
Under randomized ETH, for any $\varepsilon>0$ and $r \leqslant n^{\nicefrac{1}{2}-\varepsilon}$, no $r$-approximation for \mmvc can take time $O^*(2^{\nicefrac{n^{1-\varepsilon}}{r^{2+\varepsilon}}})$.
\end{theorem}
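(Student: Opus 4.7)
The plan is a gap-preserving reduction from the hard \is instances produced by Theorem~\ref{thm:CLN} to \mmvc, via a simple pendant blow-up. The blow-up multiplies $N$ by a factor proportional to $r$, which is exactly what transforms the $r^{1+\varepsilon}$ in the exponent of the CLN bound into the $r^{2+\varepsilon}$ that we need here.

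\textbf{Construction.} Given an \is instance $G=(V,E)$ with $|V|=N$, I attach to every $v\in V$ a set $A_v$ of $\rho$ fresh \emph{pendant} vertices, each adjacent only to $v$. The resulting graph $G'$ has $n=N(1+\rho)$ vertices, and the blow-up parameter $\rho$ is chosen of order $r$ (a constant multiple of $r$ large enough to force a strict \mmvc gap). Because each $a\in A_v$ has only $v$ as a neighbour, any maximal IS $J$ of $G'$ must contain $A_v$ in full when $v\notin J$ and must exclude $A_v$ entirely when $v\in J$; hence the maximal ISes of $G'$ are precisely the sets $J(I)=I\cup\bigcup_{v\notin I}A_v$ for $I\subseteq V$ independent in $G$, each of size $\rho N-(\rho-1)|I|$. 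Minimising over $I$ gives
\[\mathrm{MIDS}(G')=\rho N-(\rho-1)\alpha(G),\qquad \mathrm{MMVC}(G')=N+(\rho-1)\alpha(G).\]

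\textbf{Gap and running time.} Theorem~\ref{thm:CLN} supplies instances with $\alpha(G)\ge\alpha_1=\Omega(N)$ in the YES case and $\alpha(G)\le\alpha_1/r^*$ in the NO case. Taking $r^*$ a constant factor above $r$ and $\rho$ a suitably larger multiple of $r$ makes $(\rho-1)\alpha_1\gg N$, so the ratio $\mathrm{MMVC}(G')_{\mathrm{YES}}/\mathrm{MMVC}(G')_{\mathrm{NO}}$ strictly exceeds $r$ and any $r$-approximation for \mmvc decides the underlying \is instance. Since $n=\Theta(Nr)$, a hypothetical $r$-approximation for \mmvc running in $O^*(2^{n^{1-\varepsilon}/r^{2+\varepsilon}})$ yields an \is algorithm of time
\[O^*\!\bigl(2^{(Nr)^{1-\varepsilon}/r^{2+\varepsilon}}\bigr)=O^*\!\bigl(2^{N^{1-\varepsilon}/r^{1+2\varepsilon}}\bigr),\]
which, for any $\varepsilon'<\varepsilon$, strictly beats the $2^{N^{1-\varepsilon'}/r^{1+\varepsilon'}}$ barrier of Theorem~\ref{thm:CLN} and therefore contradicts the randomized ETH.

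\textbf{Main obstacle.} The most delicate point is the parameter bookkeeping: $\rho$ must be $\Theta(r)$ so that $(\rho-1)\alpha_1$ dominates $N$ and the \mmvc gap does not collapse, yet small enough that the inflated size $n=\Theta(Nr)$ leaves the target range $r\le n^{1/2-\varepsilon}$ compatible with CLN's admissible \is range $r^*\le N^{1/2-\varepsilon'}$. This balance relies crucially on CLN's instances having an almost-linear YES optimum $\alpha_1=\Omega(N)$; a sub-linear $\alpha_1$ would force $\rho$ to grow much faster than $r$ and spoil the entire calibration.
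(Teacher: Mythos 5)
Your reduction is the same as the paper's: attach $\Theta(r)$ pendant vertices to each vertex of a CLN-hard \is instance, use the exact formula $\mathrm{MMVC}(G')=N+(\rho-1)\alpha(G)$, and observe that the blow-up converts the $r^{1+\varepsilon}$ exponent in Theorem~\ref{thm:CLN} into the desired $r^{2+\varepsilon}$.

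However, several of your parameter claims are factually wrong, even though the final exponent lands close to the right place. In the CLN instances (restated in the paper as Theorem~5.2 of~\cite{ChalermsookLN13}), the graph has $N=n^{1+\varepsilon}r^{1+\varepsilon}$ vertices, the YES optimum is $\alpha_1=n^{1+\varepsilon}r=N/r^{\varepsilon}$, and the NO optimum is $n^{1+\varepsilon}r^{2\varepsilon}$, so the \is gap is $r^{1-2\varepsilon}$. Consequently: (i) $\alpha_1$ is \emph{not} $\Omega(N)$ but $N/r^{\varepsilon}$, which is sub-linear once $r$ grows; you flag linearity of $\alpha_1$ as the ``crucial'' requirement in your last paragraph, but the argument only needs $\alpha_1$ to be within a polynomial-in-$r$ factor of $N$, which it is. (ii) $r^*$ is not ``a constant factor above $r$'' and it is not a free parameter you choose; it is fixed by CLN and equals $r^{1-2\varepsilon}<r$. (iii) The resulting \mmvc YES/NO ratio is therefore $\approx r^{1-2\varepsilon}$, not ``strictly exceeds $r$.'' The paper handles exactly this slack: it computes the effective approximation ratio as $r^{1-2\varepsilon}/2$, derives a lower bound of $2^{n'^{1-\varepsilon}/r'^{2+6\varepsilon}}$ after renaming, and the stated $r^{2+\varepsilon}$ exponent is recovered by the usual re-choice of $\varepsilon$ at the outset. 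So the route and the conclusion are correct, but your ``main obstacle'' paragraph diagnoses the wrong thing, and the gap/ratio bookkeeping as written does not go through as stated --- it only goes through once the CLN parameters are used exactly, with the $r\mapsto r^{1-2\varepsilon}$ renaming step the paper performs.
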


Because we will need to rely on the structure of the
instances produced for Theorem~\ref{thm:CLN} in~\cite{ChalermsookLN13}, we
restate  here the relevant theorem:
\begin{theorem}[\cite{ChalermsookLN13}, Theorem 5.2.] For any sufficiently
small $\varepsilon > 0$ and any $r \leqslant n^{\nicefrac{1}{2}-\varepsilon}$,
there is a randomized polynomial reduction, which, from an instance of \sat
$\phi$ on $n$ variables, builds a graph $G$ with
$n^{1+\varepsilon}r^{1+\varepsilon}$ vertices such that with high probability:
\begin{itemize} \item If $\phi$ is a YES-instance, then $\alpha(G) \geqslant
n^{1+\varepsilon}r$.  \item If $\phi$ is a NO-instance, then $\alpha(G)
\leqslant n^{1+\varepsilon}r^{2\varepsilon}$.  
\end{itemize} 
\end{theorem}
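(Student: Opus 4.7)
My plan is a gap-preserving reduction from the \is instances of Theorem~5.2 in \cite{ChalermsookLN13} (restated above) to \mmvc, via the standard trick of attaching many pendants to every vertex. Starting from the graph $G=(V,E)$ produced by that theorem with parameter $r_0$ (to be fixed in terms of $r$), I would build $G'$ by attaching $k$ fresh pendant vertices to each $v \in V$. Write $N_0 = |V|$ and $N = (k+1)N_0$ for the number of vertices of $G'$.

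The structural claim I would prove is $\mathrm{MMVC}(G') = N_0 + (k-1)\alpha(G)$. In any minimal vertex cover $C$ of $G'$, minimality forces, for each $v \in V$, that $v \in C$ implies no pendant of $v$ is in $C$ (else such a pendant would be redundant), while $v \notin C$ implies all $k$ pendants of $v$ lie in $C$ (to cover the pendant edges). Setting $I = V \setminus (C \cap V)$, this yields $|C| = N_0 + (k-1)|I|$ and forces $I$ to be independent in $G$. Conversely, for any independent $I \subseteq V$, the set $(V \setminus I) \cup \{v_i' : v \in I,\ 1 \le i \le k\}$ is easily checked to be a minimal vertex cover of $G'$ (each $u \in V \setminus I$ is still needed to cover its own pendant edges, each pendant of $v \in I$ is still needed because its unique neighbor is absent). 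Hence $\mathrm{MMVC}(G')$ is realized by a maximum independent set of $G$.

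Plugging in the gap, the YES case yields $\mathrm{MMVC}(G') \ge N_0 + (k-1)n^{1+\varepsilon_0} r_0$ and the NO case $\mathrm{MMVC}(G') \le N_0 + (k-1)n^{1+\varepsilon_0} r_0^{2\varepsilon_0}$. I would fix some $\varepsilon_0 \ll \varepsilon$ (e.g.\ $\varepsilon_0 = \varepsilon/10$), set $r_0 \approx r^{1/(1-2\varepsilon_0)} = r^{1+O(\varepsilon)}$ so that the intrinsic maximal ratio $r_0^{1-2\varepsilon_0}$ exceeds $r$, and take $k$ as the smallest value (around $r\cdot r_0^{\varepsilon_0} = r^{1+O(\varepsilon)}$) which makes the $(k-1)n^{1+\varepsilon_0}r_0$ term dominate $N_0$ by a factor of at least $r$. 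These choices produce an MMVC gap of $r$ on an instance of size $N \approx n^{1+\varepsilon_0} r^{2+O(\varepsilon_0)}$.

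A hypothetical $r$-approximation algorithm for \mmvc running in time $O^*(2^{N^{1-\varepsilon}/r^{2+\varepsilon}})$ would then decide the originating \sat formula via the reduction. Substituting $N$, the exponent becomes $n^{(1+\varepsilon_0)(1-\varepsilon)} \cdot r^{(2+O(\varepsilon_0))(1-\varepsilon) - (2+\varepsilon)}$, whose $n$-exponent is strictly less than $1$ and whose $r$-exponent is negative once $\varepsilon_0$ is small enough compared to $\varepsilon$. The whole running time is therefore $2^{o(n)}$, contradicting randomized ETH since the reduction of Theorem~5.2 in \cite{ChalermsookLN13} is a randomized polynomial reduction. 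The main obstacle is tuning $(\varepsilon_0,r_0,k)$ so that the MMVC gap is at least $r$ \emph{and} the size blow-up keeps the induced running time in the sub-exponential regime; the structural identity $\mathrm{MMVC}(G') = N_0 + (k-1)\alpha(G)$ itself is routine once the pendant-minimality conditions are unpacked.
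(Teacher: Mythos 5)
The statement you were given is Theorem~5.2 of~\cite{ChalermsookLN13}, a result this paper only \emph{cites} and does not prove, so there is no ``paper's own proof'' to compare against. Your proposal instead proves Theorem~\ref{thm:mmvc-hardness}, whose proof is the one that immediately follows the restatement, and on that understanding it is essentially the paper's own argument. Both attach pendants to every vertex of the CLN graph $G$, observe via the minimality constraints (a vertex $v$ in the cover forbids its pendants, a vertex $v$ out of the cover forces all its pendants in) that the largest minimal vertex cover of $G'$ has size exactly $N_0 + (k-1)\alpha(G)$ for pendant-count $k$, and convert the CLN independent-set gap into an MMVC gap. The only real difference is bookkeeping. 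The paper fixes $k=\lceil r\rceil$, reuses the same parameter $r$ for the \is instance, and sweeps the resulting arithmetic slack under a renaming (it first derives the bound with exponent $2+6\varepsilon$ and then rescales $\varepsilon$). You introduce separate knobs $\varepsilon_0, r_0, k$ and tune them so that the exponent comes out as $2+\varepsilon$ directly, which is a cleaner and slightly more careful version of the exact same calculation --- not a different idea, and the structural identity you isolate is precisely what the paper uses implicitly.
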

\begin{proof}[Theorem \ref{thm:mmvc-hardness}] Let $\phi$ be any instance of
\sat and $G=(V,E)$ be the graph built from~$\phi$ with the reduction of Theorem~5.2. in~\cite{ChalermsookLN13}.  Keeping the same notation, we add~$\lceil r \rceil$ pendant vertices to each vertex of~$G$ and we call this new graph~$G'$.
The best solution for \mmvc in~$G'$ is to fix a maximum independent set~$I$ of~$G$ and to take the~$\lceil r \rceil$ pendant vertices to each vertices of~$I$, plus the vertices of $V \setminus I$.  This is true since~$\lceil r \rceil$ is
at least~$1$.  Let~$\opt$ be the size of a largest minimal vertex cover.

If $\phi$ is a YES-instance, then $\alpha(G) \geqslant n^{1+\varepsilon}r$, and $\opt > n^{1+\varepsilon}r^2$.
If $\phi$ is a NO-instance, then $\alpha(G) \leqslant n^{1+\varepsilon}r^{2\varepsilon}$, and $\opt < n^{1+\varepsilon}r^{1+2\varepsilon}+n^{1+\varepsilon}r^{1+\varepsilon} < 2n^{1+\varepsilon}r^{1+2\varepsilon}$.
Therefore, an approximation with ratio $r'=\nicefrac{r^{1-2\varepsilon}}{2}$ for \mmvc would permit to solve \sat.
Assuming ETH, this cannot take time $2^{o(n)}$.

As $n':=|V(G')|=n^{1+\varepsilon}r^{2+\varepsilon}$, such an approximation would not be possible in time $2^{\nicefrac{n'^{1-\varepsilon}}{r^{2+\varepsilon}}}$.
Renaming $r'$ by $r$ and $n'$ by $n$, an $r$-approximation would not be possible in time $O^*(2^{\nicefrac{n^{1-\varepsilon}}{r^{2+6\varepsilon}}})$.~\qed
\end{proof}

\section{\textsc{Induced Path, Tree and Forest}}

In this section we study the \mip, \textsc{Tree} and \textsc{Forest} problems,
where we are looking for the largest set of vertices inducing a graph of the
respective type.  These are all hard to approximate in polynomial time~\cite{K95,LY93}, and we observe that an easy reduction from \is shows that the
generic scheme of Theorem~\ref{thm:generic} is almost tight in sub-exponential
time for the latter two. However, the most interesting result of this section
is a direct reduction we present from \tsat to \mip. This reduction allows us
to establish inapproximability for this problem \emph{without} the PCP theorem,
thus eliminating the $\varepsilon$ from the running time lower bound.
\begin{theorem}\label{thm:forest} Under ETH, for any $\varepsilon>0$ and
sufficiently large $r \leqslant n^{\nicefrac{1}{2}-\varepsilon}$, an $r$-approximation for
\mif or \mit cannot take time
$2^{\nicefrac{n^{1-\varepsilon}}{(2r)^{1+\varepsilon}}}$.  
\end{theorem}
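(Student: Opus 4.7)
The plan is a very simple reduction from the \is instances provided by Theorem~\ref{thm:CLN}, losing only a factor of~$2$ in the approximation ratio; this exactly accounts for the~$(2r)$ appearing in the denominator of the claimed bound.

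The key observation is that every induced forest, and in particular every induced tree, is bipartite, so in polynomial time we can two-color it and extract an independent set of size at least half the whole. Hence any $\rho$-approximation algorithm for \mif or \mit can be transformed, via polynomial-time post-processing, into a $2\rho$-approximation for \is on an appropriately chosen graph. For \mif, I simply set $G':=G$, where $G$ is the given \is instance: since any independent set is already an induced forest, the maximum induced forest in~$G'$ has size at least $\alpha(G)$, and a $r$-approximate forest~$F$ becomes, after two-coloring, an independent set in~$G$ of size at least $|F|/2 \geq \alpha(G)/(2r)$. For \mit, I set $G':=G+u$, where~$u$ is a new vertex adjacent to every vertex of~$G$. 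Then $\{u\}\cup I$ induces a star in~$G'$ whenever $I$ is an independent set of~$G$, so the maximum induced tree in~$G'$ has size at least $\alpha(G)+1$. An induced tree $T$ returned by a $r$-approximation is two-colored, and the color class that avoids~$u$ produces an independent set in~$G$ of size at least $\lfloor (|T|-1)/2 \rfloor$. In the regime $\alpha(G)\to\infty$ ensured by the CLN reduction, this is again essentially a $2r$-approximation for \is.

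Feeding this back into Theorem~\ref{thm:CLN} with $2r$ in place of~$r$, an $r$-approximation for \mif or \mit running in time $2^{n^{1-\varepsilon}/(2r)^{1+\varepsilon}}$ would produce a $2r$-approximation for \is within the running time that Theorem~\ref{thm:CLN} rules out, contradicting ETH. The only delicate point is bookkeeping: I must check that $2r$ still lies in the valid range $O(n^{1/2-\varepsilon'})$ of Theorem~\ref{thm:CLN}, which holds after a small decrease of~$\varepsilon$ since $2 \leq n^{\varepsilon/2}$ for $n$ sufficiently large, and, in the \mit case, that the additive $O(1)$ loss from the universal vertex is absorbed in the $n^{1-\varepsilon}$ and $(2r)^{1+\varepsilon}$ terms of the exponent. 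Both are routine, so the main obstacle is purely arithmetic rather than structural: the heart of the proof is the one-line remark that a forest is bipartite.
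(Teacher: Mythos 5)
Your proposal is correct and matches the paper's own proof essentially verbatim: both reduce from the \is instances of Theorem~\ref{thm:CLN}, both observe that every induced forest (or tree) is bipartite so its size is sandwiched between $\alpha(G)$ and $2\alpha(G)$, and both add a universal vertex to handle \mit. You spell out the two-coloring post-processing step and the bookkeeping a bit more explicitly than the paper does, but the underlying idea and the source of the factor~$2$ in the $(2r)^{1+\varepsilon}$ denominator are the same.
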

\begin{proof}
For \mif we simply observe that, if $\alpha(G)$ is the size of the largest
independent set of a graph, the largest induced forest has size between
$\alpha(G)$ (since an independent set is a forest) and $2\alpha(G)$ (since
forests are bipartite). The result then follows from Theorem~\ref{thm:CLN}.

For \mit, we repeat the same argument, after adding a universal vertex
connected to everything to the instances of \is of Theorem~\ref{thm:CLN}.~\qed
\end{proof}

\begin{theorem}\label{thm:kipath-hardness}
Under ETH, for any $\varepsilon>0$ and $r \leqslant n^{1-\varepsilon}$, an $r$-approximation for \kipath cannot take time $2^{o(\nicefrac{n}{r})}$.
\end{theorem}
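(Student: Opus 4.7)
The plan is to give a direct polynomial-time reduction from \tsat to \kipath, bypassing PCP machinery in order to obtain a sharper inapproximability bound than what generic gap-amplification would yield. Starting from a \tsat instance $\phi$ on $N$ variables (and, by the Sparsification Lemma, $O(N)$ clauses), I will build a graph $G$ with $n = \Theta(rN)$ vertices such that the longest induced path of $G$ has length $\Omega(n)$ when $\phi$ is satisfiable and $O(n/r)$ when $\phi$ is not. Because the two optima then differ by a factor of $\Omega(r)$, any $r$-approximation algorithm for \kipath running in $2^{o(n/r)} = 2^{o(N)}$ time would decide $\phi$ in sub-exponential time, contradicting the ETH.

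The construction follows a classical ``path-selection'' template. For each variable $x_i$ I introduce a variable gadget with two internally disjoint tracks of $\Theta(r)$ vertices each, one representing $x_i=\text{true}$ and one $x_i=\text{false}$, joined at a pair of junction vertices. The $N$ variable gadgets are chained in series via connectors so that, in order to collect a long induced path, any candidate solution must enter every gadget and commit to exactly one of its two tracks (taking both tracks would force revisiting a junction or would create a chord). For each clause $C_j = (\ell_1 \vee \ell_2 \vee \ell_3)$ I then add a small clause gadget whose edges introduce a chord on the candidate induced path exactly when the track choices falsify $C_j$: the gadget joins the ``wrong-value'' sides of $C_j$'s three literals so that, under a falsifying assignment, two non-adjacent path vertices end up at distance $2$ in $G$.

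In the satisfiable case the long induced path is read off directly from a satisfying assignment $\sigma$: it selects the $\sigma(x_i)$-track in each variable gadget, and no clause gadget introduces a chord because every clause has a true literal whose wrong-value side is entirely avoided by the path. This yields an induced path on $\Omega(rN) = \Omega(n)$ vertices. In the unsatisfiable case, every candidate induced path encodes an assignment that falsifies some clause, and the associated clause gadget places an unavoidable chord on the path, so it is forced to break well before spanning the full chain; a combinatorial accounting then shows that the total number of vertices it can still visit is only $O(n/r)$.

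The main obstacle will be engineering the clause gadget so that the ``bad'' chord is unavoidable on \emph{every} induced path implementing a falsifying assignment, not only the intended chain-traversing one: paths that mix tracks, enter the chain partway, or try to use clause gadget vertices as shortcuts must all be caught. The standard remedy is to duplicate the guard vertices per literal and attach them at multiple positions along the wrong-value track, so that any falsified clause forces a chord regardless of the candidate path's exact shape. Once this ``robust chord'' property is in hand, the size count and the two \textsc{opt} bounds are routine, and the ETH lower bound follows.
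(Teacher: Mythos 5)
Your high-level plan — a direct reduction from \tsat to \kipath that bypasses PCP machinery, building a graph of $\Theta(rN)$ vertices from an $N$-variable instance so that the gap is $\Theta(r)$ and ETH rules out $2^{o(n/r)}$ — is exactly the strategy the paper uses. But the gadget design you sketch has a genuine soundness gap, and it fails precisely in the regime the theorem is stated for.

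\paragraph{The soundness gap.} You give each variable a gadget containing two internally disjoint tracks of $\Theta(r)$ vertices each. But any single track is itself an induced path on $\Theta(r)$ vertices, so a candidate solution can simply sit inside one gadget and ignore the rest of the graph. This alone gives $\opt \ge \Theta(r)$ in the NO case, so the gap your construction achieves is at most $\Theta(rN)/\Theta(r) = \Theta(N) = \Theta(n/r)$, i.e.\ $\Theta(\min(r,N))$. That matches the claimed $r$-gap only when $r \le N$, which translates to $r \le \sqrt{n}$. The theorem, however, asserts the lower bound for all $r \le n^{1-\varepsilon}$, and for small $\varepsilon$ this allows $r$ up to roughly $N^{(1-\varepsilon)/\varepsilon} \gg N$. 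No amount of clause-gadget engineering fixes this, because the offending path never touches a clause gadget; the problem is that your per-variable building block is itself long. The paper avoids this by keeping every local building block \emph{constant-sized}: for each clause it has a clique of at most $7$ vertices (the satisfying partial assignments), and a single ``copy'' $H_j$ of the whole clause chain has only $O(m)$ vertices. The factor-$r$ blow-up comes from stringing $r$ copies of $H_1$ \emph{end to end}, with contradicting edges between inconsistent partial assignments across copies. In the NO case the per-column analysis (at most $5$ clique vertices from the $r$ cliques of any fixed clause across all copies) bounds $\opt$ by $O(m)$ independently of $r$, which is what makes the $\Theta(r)$ gap go through for the full range of $r$.

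\paragraph{The clause mechanism.} Your description of how a falsifying assignment ``introduces a chord'' is not yet an argument. ``Two non-adjacent path vertices end up at distance $2$ in $G$'' does not violate inducedness: only an actual edge between two non-consecutive path vertices, or a path vertex of degree $\ge 3$ within $G[P]$, does. If the clause gadget is a separate vertex adjacent to the three wrong tracks, the path simply avoids it. If the clause gadget adds edges between positions on the wrong tracks, a path can take a prefix of one wrong track and a suffix of another and dodge all the cross-edges, or just stop short of the chord. Your proposed remedy (``duplicate guard vertices per literal, attach at multiple positions'') does not clearly force a chord either, for the same reason: an induced path is free not to include any particular guard, and avoiding an edge only requires omitting one of its endpoints. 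The paper sidesteps this entirely by encoding choices at the \emph{clause} level via cliques (so any induced path can touch at most two vertices of a clique) and by making inconsistency a direct edge (``contradicting edge'') between the chosen partial-assignment vertices themselves, so that the accounting over columns bounds the path length without needing a robust chord-forcing gadget.

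In short, your clause-centric vs.\ variable-centric distinction is the key design point you got backwards: the paper deliberately avoids variable gadgets of size $\Theta(r)$ because they leak long induced paths in the NO case. You would need to rethink the construction so that no $O(1)$-local region of the graph contains an induced path longer than $O(1)$, and let the factor $r$ come from serial repetition rather than gadget inflation.
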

\begin{proof}
Let~$\phi$ be any instance of \tsat.
For any positive integer~$r$, we build an instance graph~$G$ of \kipath in the following way.
For each clause~$C_i$ ($i \in [m]$) we add seven vertices $v^1_{i,1}, v^1_{i,2}, \ldots, v^1_{i,7}$ which form a clique~$C^1_i$ and correspond to the seven partial assignments of the three literals of~$C_i$ satisfying the clause (if there is only two literals, then there is only three vertices in the clique).
We add~$m$ vertices $v^1_1,v^1_2, \ldots, v^1_m$, and for all $i \in [2,m]$, we link~$v^1_i$ to all the vertices of the cliques~$C^1_{i-1}$ and all the vertices of the cliques~$C^1_i$.
Vertex~$v^1_1$ is only linked to all the vertices of~$C^1_1$.
The graph defined at this point is called~$H_1$.
We make $r-1$ copies of~$H_1$, denoted by $H_2$, \dots, $H_r$.
For each $j \in [2,r]$, the vertices of~$H_j$ are analogously denoted by $v^j_{i,1}, v^j_{i,2}, \ldots, v^j_{i,7}$ (vertices in the clique~$C^j_i$ corresponding to the clause~$C_i$) and~$v^j_i$.
For each $j \in [2,r]$, we link vertex~$v^j_1$ to all the vertices of the clique~$C^{j-1}_m$, and we add an edge between any two vertices corresponding to contradicting partial assignments, that is assignments attributing different truth values to the same variable (even if those vertices are in distinct~$H_i$s).
We call such an edge a \emph{contradicting edge}.
The edges within the cliques~$C^j_i$ can be seen as contradicting edges, but we will not call them so.

\begin{figure}\label{fig:kipath-reduction}
\centering
\begin{tikzpicture}

%H1

%C1
\node[preaction={fill, lightgray},draw,circle] (v1) at (-1.5,1.8) {};

\node[draw,rectangle,rounded corners] (v11) at (0,3.6) {$x_1x_2x_3$};
\node[preaction={fill, lightgray},draw,rectangle,rounded corners] (v12) at (0,3) {$x_1x_2\overline{x_3}$};
\node[draw,rectangle,rounded corners] (v13) at (0,2.4) {$x_1\overline{x_2}x_3$};
\node[draw,rectangle,rounded corners] (v14) at (0,1.8) {$x_1\overline{x_2}\overline{x_3}$};
\node[draw,rectangle,rounded corners] (v15) at (0,1.2) {$\overline{x_1}x_2x_3$};
\node[draw,rectangle,rounded corners] (v16) at (0,0.6) {$\overline{x_1}\overline{x_2}x_3$};
\node[draw,rectangle,rounded corners] (v17) at (0,0) {$\overline{x_1}\overline{x_2}\overline{x_3}$};

\node[draw,rectangle,rounded corners,fit=(v11) (v12) (v13) (v14) (v15) (v16) (v17)] (c1) {};

\draw (v1) -- (v11.west) ;
\draw[very thick] (v1) -- (v12.west) ;
\draw (v1) -- (v13.west) ;
\draw (v1) -- (v14.west) ;
\draw (v1) -- (v15.west) ;
\draw (v1) -- (v16.west) ;
\draw (v1) -- (v17.west) ;

%C2
\begin{scope}[xshift=3cm]
\node[preaction={fill, lightgray},draw,circle] (v2) at (-1.5,1.8) {};

\node[draw,rectangle,rounded corners] (v21) at (0,3.6) {$x_1x_2x_3$};
\node[preaction={fill, lightgray},draw,rectangle,rounded corners] (v22) at (0,3) {$x_1x_2\overline{x_3}$};
\node[draw,rectangle,rounded corners] (v23) at (0,2.4) {$x_1\overline{x_2}x_3$};
\node[draw,rectangle,rounded corners] (v24) at (0,1.8) {$x_1\overline{x_2}\overline{x_3}$};
\node[draw,rectangle,rounded corners] (v25) at (0,1.2) {$\overline{x_1}x_2x_3$};
\node[draw,rectangle,rounded corners] (v26) at (0,0.6) {$\overline{x_1}x_2\overline{x_3}$};
\node[draw,rectangle,rounded corners] (v27) at (0,0) {$\overline{x_1}\overline{x_2}\overline{x_3}$};

\node[draw,rectangle,rounded corners,fit=(v21) (v22) (v23) (v24) (v25) (v26) (v27)] (c2) {};

\draw (v2) -- (v21.west) ;
\draw[very thick] (v2) -- (v22.west) ;
\draw (v2) -- (v23.west) ;
\draw (v2) -- (v24.west) ;
\draw (v2) -- (v25.west) ;
\draw (v2) -- (v26.west) ;
\draw (v2) -- (v27.west) ;

\draw (v2) -- (v11.east) ;
\draw[very thick] (v2) -- (v12.east) ;
\draw (v2) -- (v13.east) ;
\draw (v2) -- (v14.east) ;
\draw (v2) -- (v15.east) ;
\draw (v2) -- (v16.east) ;
\draw (v2) -- (v17.east) ;
\end{scope}

%C3
\begin{scope}[xshift=6cm]
\node[preaction={fill, lightgray},draw,circle] (v3) at (-1.5,1.8) {};

\node[preaction={fill, lightgray},draw,rectangle,rounded corners] (v31) at (0,3.6) {$x_1x_2x_4$};
\node[draw,rectangle,rounded corners] (v32) at (0,3) {$x_1x_2\overline{x_4}$};
\node[draw,rectangle,rounded corners] (v33) at (0,2.4) {$x_1\overline{x_2}\overline{x_4}$};
\node[draw,rectangle,rounded corners] (v34) at (0,1.8) {$\overline{x_1}x_2x_4$};
\node[draw,rectangle,rounded corners] (v35) at (0,1.2) {$\overline{x_1}\overline{x_2}x_4$};
\node[draw,rectangle,rounded corners] (v36) at (0,0.6) {$\overline{x_1}x_2\overline{x_4}$};
\node[draw,rectangle,rounded corners] (v37) at (0,0) {$\overline{x_1}\overline{x_2}\overline{x_4}$};

\node[draw,rectangle,rounded corners,fit=(v31) (v32) (v33) (v34) (v35) (v36) (v37)] (c3) {};

\draw[very thick] (v3) -- (v31.west) ;
\draw (v3) -- (v32.west) ;
\draw (v3) -- (v33.west) ;
\draw (v3) -- (v34.west) ;
\draw (v3) -- (v35.west) ;
\draw (v3) -- (v36.west) ;
\draw (v3) -- (v37.west) ;

\draw (v3) -- (v21.east) ;
\draw[very thick] (v3) -- (v22.east) ;
\draw (v3) -- (v23.east) ;
\draw (v3) -- (v24.east) ;
\draw (v3) -- (v25.east) ;
\draw (v3) -- (v26.east) ;
\draw (v3) -- (v27.east) ;
\end{scope}

%C4
\begin{scope}[xshift=9cm]
\node[preaction={fill, lightgray},draw,circle] (v4) at (-1.5,1.8) {};

\node[draw,rectangle,rounded corners] (v41) at (0,3.6) {$x_2x_3x_4$};
\node[preaction={fill, lightgray},draw,rectangle,rounded corners] (v42) at (0,3) {$x_2\overline{x_3}x_4$};
\node[draw,rectangle,rounded corners] (v43) at (0,2.4) {$x_2x_3\overline{x_4}$};
\node[draw,rectangle,rounded corners] (v44) at (0,1.8) {$x_2\overline{x_3}\overline{x_4}$};
\node[draw,rectangle,rounded corners] (v45) at (0,1.2) {$\overline{x_2}x_3x_4$};
\node[draw,rectangle,rounded corners] (v46) at (0,0.6) {$\overline{x_2}\overline{x_3}x_4$};
\node[draw,rectangle,rounded corners] (v47) at (0,0) {$\overline{x_2}\overline{x_3}\overline{x_4}$};

\node[draw,rectangle,rounded corners,fit=(v41) (v42) (v43) (v44) (v45) (v46) (v47)] (c4) {};

\draw (v4) -- (v41.west) ;
\draw[very thick] (v4) -- (v42.west) ;
\draw (v4) -- (v43.west) ;
\draw (v4) -- (v44.west) ;
\draw (v4) -- (v45.west) ;
\draw (v4) -- (v46.west) ;
\draw (v4) -- (v47.west) ;

\draw[very thick] (v4) -- (v31.east) ;
\draw (v4) -- (v32.east) ;
\draw (v4) -- (v33.east) ;
\draw (v4) -- (v34.east) ;
\draw (v4) -- (v35.east) ;
\draw (v4) -- (v36.east) ;
\draw (v4) -- (v37.east) ;

\end{scope}

\end{tikzpicture}
\caption{The graph $H_1$ built for the instance $\{x_1 \lor \neg x_2 \lor x_3, x_1 \lor x_2 \lor \neg x_3, \neg x_1 \lor x_2 \lor \neg x_4, x_2 \lor \neg x_3 \lor x_4\}$.
$G$ is obtained by laying end to end $r$ copies of $H_1$.
The rectangle boxes are the cliques $C^j_i$, and the contradicting edges are not shown.
An induced path with $2m$ vertices is represented in gray and can be extended into one with $2rm$ vertices in $G$ (the formula being satisfiable).}
\end{figure}
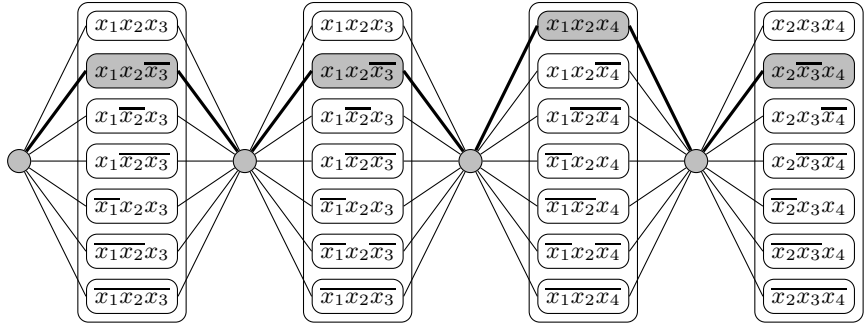

If~$\phi$ is satisfiable, let~$\tau$ be a truth assignment.  Let~$S$ be the set
of the~$rm$ vertices in cliques~$C^j_i$ agreeing with~$\tau$ (exactly one
vertex per clique).  The graph induced by $P=\bigcup_{1 \leqslant i \leqslant
m,1 \leqslant j \leqslant r} \{v^j_i\}$ $\cup S$ is a path with~$2rm$ vertices.
Indeed, $\forall i \in [2,m], j \in [r]$, the degree of~$v^j_i$ in~$G[P]$ is~$2$, since $|P \cap C^j_i|=1$ and $|P \cap C^j_{i-1}|=1$.  And, $\forall j \in [2,r]$, the degree of~$v^j_1$ in~$G[P]$ is~$2$, since $|P \cap C^j_1|=1$ and
$|P \cap C^{j-1}_m|=1$.  Vertex~$v^1_1$ has only degree~$1$ (one vertex in~$C^1_1$) and is one endpoint of the path.  The degree of the vertices of~$S$ in~$G[P]$ is also~$2$, since by construction there is no contradicting edge in the
graph induced by~$P$.  So, $\forall i \in [1,m-1], j \in [r]$, the only two
neighbors of the unique vertex in $S \cap C^j_i$ are~$v^j_i$ and~$v^j_{i+1}$.
And, $\forall j \in [r-1]$, the only two neighbors of the unique vertex in $S
\cap C^j_m$ are~$v^j_m$ and~$v^{j+1}_1$.  The degree in~$G[P]$ of the unique
vertex in $S \cap C^r_m$ is only~$1$; it is the other endpoint of the path. 

For each $i \in [m]$, we call \emph{column} $R_i$ the union of the $r$ cliques $C^1_i$, $C^2_i$, \dots, $C^r_i$.
Assume there is an induced path $G[Q]$ such that for some column $R_i$, $Q \cup R_i \geqslant 6$.
So there are at least four vertices $u_1, u_2, u_3, u_4$ which are in $Q \cup R_i$ and are not one of the two endpoints of $G[Q]$.
We set $U=\{u_1, u_2, u_3, u_4\}$.
We say that two vertices in the cliques $C^j_i$ \emph{agree} if they represent non contradicting (or \emph{compatible}) partial assignment.
We observe that two vertices in the same column $R_i$ agree iff they represent the same partial assignment.
First, we can show that all the vertices in $U$ have to (pairwise) agree.
If one vertex $u \in U$ does not agree with any of the other vertices in $U$, then $u$ has degree at least $3$ in $G[Q]$ (there are three contradicting edges linking $u$ to $U \setminus \{u\}$) which is not possible in a path.
So, any vertex in $U$ should agree with at least one vertex in $U \setminus \{u\}$.
The first possibility is that there are two pairs $(u,v)$ and $(w,x)$ of vertices spanning $U$, such that the vertices agree within their pair but the two pairs do not agree.
But that would create a cycle $uwvx$.
The only remaining possibility is that all the vertices in $U$ agree.
As those vertices are in the same column, they even represent the \emph{same} partial assignment.

Now, we will describe the path induced by~$Q$ by necessary conditions and derive that the formula is satisfiable.
Let~$u_5$ and~$u_6$ be two vertices in $(Q \cup R_i) \setminus U$, and $W=U \cup \{u_5,u_6\}$.
We observe that~$u_5$ and~$u_6$ should agree with the vertices of~$U$, otherwise their degree in~$G[Q]$ would be at least~$4$.
So, all the vertices in~$W$ (pairwise) agree.
The vertices of~$W$ are in pairwise distinct copies~$H_i$s.
Hence, there are at least~$4$ copies denoted by $H_{a_1}, H_{a_2}, H_{a_3}, H_{a_4}$ which contain a vertex of~$W$ and \emph{do not} contain an endpoint of~$G[Q]$.
Let~$v^{a_1}_{i,h}$ be the unique vertex in $W \cap H_{a_1}$.
By the previous remarks, $\forall p \in \{2,3,4\}$,~$v^{a_p}_{i,h}$ is the unique vertex in $W \cap H_{a_p}$.
For each $p \in [4]$, the two neighbors of~$v^{a_p}_{i,h}$ in~$G[Q]$ have to be~$v^{a_p}_i$ and~$v^{a_p}_{i+1}$.
Vertex~$v^{a_p}_{i,h}$ cannot incident to a contradicting edge, otherwise it would create a vertex of degree at least~$4$ in the path.
At its turn, vertex~$v^{a_p}_{i+1}$ has degree~$2$ in~$G[Q]$, and its second neighbor has to be in the clique~$C^{a_p}_{i+1}$ (if its second neighbor was also in~$C^{a_p}_i$, it would form a triangle).
Let~$w_{p,i+1}$ be the unique vertex in $C^{a_p}_{i+1} \cap P$.
By the same arguments as before,~$w_{1,i+1}$, $w_{2,i+1}$, $w_{3,i+1}$, and~$w_{4,i+1}$ should all agree.
This way we can extend the four fragments of paths to column~$R_{i+1}$ up to~$R_m$.
Symmetrically, we can extend the fragments of paths to column~$R_{i-1}$ to~$R_1$. 
Now, if we just consider the path induced by $Q \cup H_{a_1}$, it goes through consistent partial assignments for each clause of the instance.
The global assignment, built from all those partial assignments, satisfies all the clauses.
So, the contrapositive is, if~$\phi$ is not satisfiable, then for all $i \in [m]$, $|R_i \cup Q|<6$.
This implies $|Q|<10m$.

The number of vertices of $G$ is $8rm$.
Recall that, under ETH~\cite{impagliazzo01}, \tsat is not solvable in $2^{o(m)}$.
Thus, under ETH, any $r$-approximation for \kipath cannot take time
$2^{o(\nicefrac{n}{r})}$.~\qed
%  We can observe that this result also holds for \kicycle by
%linking all the vertices of $C^r_m$ to $v_1$.  
\end{proof}

\section{\atsp and Grundy Coloring}

In this section we deal with two problems for which the best known hardness of
approximation bounds are small constants~\cite{KLS13,kortsarz07}, but no
constant-factor approximation is known. We thus only present some algorithmic
results.

For \atsp, the version of the TSP where we have the triangle inequality but
distances may be asymmetric, the best known approximation algorithm has ratio
$O(\nicefrac{\log n}{\log\log n})$~\cite{AGMGS10}. Here, we show that a classical, simpler $\log
n$-approximation~\cite{FriezeGM82} can be adapted into an approximation scheme matching its
performance in polynomial time. Whether the same can be done for the more
recent, improved, algorithm remains as an interesting question.
\begin{theorem}\label{thm:atsp}
For any $r \leqslant n$, \atsp is $\log r$-approximable in time $O^*(2^{\nicefrac{n}{r}})$.
\end{theorem}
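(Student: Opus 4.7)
The plan is to combine the classical Frieze--Galbiati--Maffioli $\log n$-approximation~\cite{FriezeGM82} with an exact Held--Karp-style dynamic programming algorithm for \atsp that runs in $O^*(2^n)$ time (the standard subset DP over pairs (visited set, current endpoint) works in the asymmetric setting just as well as in the symmetric one).

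Recall that the FGM scheme proceeds in rounds: at each round we compute, in polynomial time via a bipartite matching formulation, a minimum-cost cycle cover of the current vertex set; its weight is a lower bound on \opt\ because any Hamiltonian tour is itself a cycle cover. One representative is chosen per cycle; the instance shrinks by at least a factor of $2$ (each cycle has size at least $2$), and the triangle inequality is preserved by passing to the induced instance, so the optimum of the next sub-instance is still at most \opt. The final tour is obtained by stitching the recursive tour through the representatives with the cycles of each round, paying at most $\opt$ per round.

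My modification is simply to stop recursing as soon as at most $n/r$ vertices remain. Since each round halves the size, this happens after at most $\lceil \log r\rceil$ rounds. At that point the residual ATSP is solved exactly using Held--Karp in time $O^*(2^{n/r})$. The total cost is at most $(\lceil\log r\rceil+1)\cdot \opt = O(\log r)\cdot\opt$, and the running time is dominated by the exact subroutine at the bottom of the recursion, giving $O^*(2^{n/r})$ overall; all recursive rounds contribute only polynomial work.

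There is no serious obstacle: the only point worth checking is that every induced sub-instance that appears still satisfies the asymmetric triangle inequality (so that the representative's tour can validly be used as a lower bound against \opt\ in that round), which holds because the sub-instance inherits the shortest-path metric of the original. Once this is verified, the $\log r$-approximation bound follows by the same round-by-round accounting as in the original FGM analysis, truncated after $\lceil \log r\rceil$ levels.
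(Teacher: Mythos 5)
Your proposal is correct and follows the same approach as the paper: truncate the Frieze--Galbiati--Maffioli cycle-cover recursion after roughly $\log r$ rounds so that at most $n/r$ vertices remain, then solve the residual instance exactly by Held--Karp dynamic programming in $O^*(2^{n/r})$ time. Your round-by-round accounting (including the observation that induced sub-instances inherit the triangle inequality and that each sub-instance's optimum is at most $\opt$) is the same as the paper's, just stated in more detail; the paper's bound of $\log r \cdot \opt$ and your $(\lceil \log r\rceil + 1)\cdot \opt$ differ only by lower-order terms.
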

\begin{proof} We roughly recall the $\log
n$-approximation of \atsp detailed in~\cite{FriezeGM82}.  The idea is to solve
the problem of finding a (vertex-)disjoint union of circuits spanning the graph
with minimum weight.  This can be expressed as a linear program and therefore
it can be solved in polynomial time.  Let the circuits be $C_1,C_2, \ldots
C_h$.  We observe that the total length of the circuits is bounded by $\opt$
the optimum value for \atsp.  We choose arbitrarily a vertex $v_i$ in each
$C_i$ and recurse on the graph induced by $\{v_1,v_2,\ldots,v_h\}$.  By the
triangle inequality, we can combine a solution of \atsp in
$G[\{v_1,v_2,\ldots,v_h\}]$ to the circuits $C_i$s, and get a solution whose
value is bounded by the sum of the lengths of the $C_i$s plus the value of the
solution for $G[\{v_1,v_2,\ldots,v_h\}]$, which would be $2\opt$ if we solve
$G[\{v_1,v_2,\ldots,v_h\}]$ to the optimum.  In general, the depth of recursion
is a bound on the ratio (see~\cite{FriezeGM82}).  At each recursion step, the
number of vertices in the remaining graph is at least divided by two.  So,
after at most $\log n$ recursions the algorithm terminates, hence the ratio.

Now, we can afford some superpolynomial computations. 
After $\log r$ recursions the number of vertices in the remaining graph is no more than $\nicefrac{n}{2^{\log r}}=\nicefrac{n}{r}$.
We solve optimally this instance by dynamic programming in time
$O^*(2^{\nicefrac{n}{r}})$.  The solution that we output has length smaller than
$\log r \cdot \opt$.~\qed
 \end{proof}

\grundy is the problem of ordering the vertices of a graph so that a greedy
first-fit coloring applied on that order would use as many colors as possible.
 Unless NP$\subseteq$RP, \grundy admits no PTAS~\cite{kortsarz07}, but it is unknown if it can be $o(n)$-approximated.  
%\grundy
%is solvable in time $O^*(2.246^n)$~\cite{bonnet14}.

Observe that, since this is not a subgraph problem, it is not \emph{a priori} obvious
that the baseline trade-off performance of Theorem \ref{thm:generic} can be
achieved. However, we give a simple trade-off scheme that does exactly that by
reducing the ordering problem to that of finding an appropriate ``witness'',
which is a set of vertices.
\begin{theorem}\label{thm:grundy} For any $r>1$, \grundy can be
$r$-approximated in time $O^*(c^{\nicefrac{n \log r}{r}})$, for some constant $c$.
\end{theorem}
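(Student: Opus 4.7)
The plan is to reduce the ordering problem \grundy to a subset-enumeration problem in the spirit of Theorem~\ref{thm:generic}, exploiting the fact that the Grundy number is monotone under induced subgraphs. Concretely, I would design an algorithm that enumerates every $S\subseteq V$ of size at most $n/r$, computes the Grundy number $\chi_g(G[S])$ together with an optimal first-fit ordering of $G[S]$, and then extends that ordering to $V$ by prepending the vertices of $V\setminus S$ in an arbitrary order. Since already-colored extra neighbors can only push greedy first-fit toward higher colors, the resulting global ordering uses at least $\chi_g(G[S])$ colors on $G$.

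The crux is the following structural lemma: for any $G$ with $\chi_g(G)=k$, some $S\subseteq V$ with $|S|\le n/r$ satisfies $\chi_g(G[S])\ge k/r$ (up to minor rounding losses absorbed into the constant $c$). To prove it, I would fix an optimal Grundy partition $V_1,\ldots,V_k$ of $V$ and group $\{1,\ldots,k\}$ into $r$ consecutive index-blocks of roughly $k/r$ classes each. By averaging, some block $B$ satisfies $|\bigcup_{i\in B}V_i|\le n/r$; setting $S=\bigcup_{i\in B}V_i$ and renumbering the classes in $B$ as $1,2,\ldots,|B|$ yields a valid Grundy coloring of $G[S]$ with $|B|\ge \lfloor k/r\rfloor$ colors. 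Validity is immediate: every $V_i$ remains independent, and any $v\in V_i$ already had in the original Grundy coloring a neighbor in every $V_{i'}$ with $i'<i$, and hence in particular in every $V_{i'}$ preceding $V_i$ inside the block~$B$.

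Given the lemma, the algorithm enumerates all $S$ with $|S|\le n/r$ and, for each, computes $\chi_g(G[S])$ and a witnessing ordering via a standard subset dynamic program. The DP rests on the well-known observation that the first color class of any Grundy coloring is a maximal independent set of the graph, so that $\chi_g(G[T])=1+\max_I \chi_g(G[T\setminus I])$, where $I$ ranges over the maximal independent sets of $G[T]$; this can be evaluated in $O^*(c_0^{|S|})$ time for some absolute constant $c_0$ by iterating over all triples (vertex out of $T$, vertex in $T\setminus I$, vertex in $I$). Multiplying the $\binom{n}{\le n/r}=O^*((er)^{n/r})$ enumerated subsets by the $O^*(c_0^{n/r})$ cost of the DP gives total time $O^*((c_0 e r)^{n/r})=O^*(c^{n\log r/r})$ for a suitable $c$, and extending the best ordering found delivers an $r$-approximate solution.

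The main obstacle is the structural lemma: one has to notice that slicing a Grundy partition into consecutive index-blocks and restricting to the induced subgraph preserves the Grundy property, which then automatically pairs an $n/r$-sized subset with a Grundy coloring of value $\ge k/r$. Once this observation is in place, both the brute-force enumeration and the subset DP are essentially routine.
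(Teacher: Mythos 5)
Your overall strategy mirrors the paper's: enumerate subsets of size roughly $n/r$, exactly solve \grundy on the induced subgraph, and rely on a structural lemma guaranteeing that some such subset has Grundy number at least $\lfloor k/r\rfloor$. Your structural lemma is also essentially the paper's, just with a different way of selecting the color classes. The paper considers an optimal Grundy partition $C_1,\ldots,C_k$ (of a minimal witness), picks the $\lfloor k/r\rfloor$ \emph{smallest} classes, and notes that these have total size at most $\lfloor k/r\rfloor\cdot n/k \le n/r$; you group the classes into $r$ consecutive index-blocks and average. Both choices yield a subset $S$ with $|S|\le n/r$, and the key verification --- that retaining a subfamily of the $C_i$'s (in increasing index order) and re-indexing gives a valid Grundy coloring of $G[S]$ --- is the same and correct, since $v\in C_i$ has a neighbor in every earlier class and in particular in every retained earlier class, all of which lie in $S$.

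There is, however, a genuine error in the final step of your algorithm. You extend an optimal ordering $\sigma$ of $G[S]$ to an ordering of $G$ by \emph{prepending} $V\setminus S$, and justify this with the claim that ``already-colored extra neighbors can only push greedy first-fit toward higher colors.'' This monotonicity claim is false: when vertices are inserted in front of $\sigma$, the colors assigned to vertices of $S$ can shift in a way that \emph{destroys} the color-$k$ witness. A concrete counterexample: let $G$ be the path $v_0\!-\!v_1\!-\!v_2\!-\!v_3\!-\!v_4$ and $S=\{v_1,v_2,v_3,v_4\}$. The ordering $\sigma=(v_1,v_4,v_2,v_3)$ of $G[S]$ gives colors $1,1,2,3$, so $\chi_g(G[S])\ge 3$. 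But prepending $v_0$ to obtain $(v_0,v_1,v_4,v_2,v_3)$ gives colors $1,2,1,1,2$ --- only $2$ colors, strictly fewer. The fix is trivial but in the opposite direction: \emph{append} the vertices of $V\setminus S$ after $\sigma$. Then every vertex of $S$ is colored exactly as in $G[S]$ (its earlier-colored neighbors are unchanged), so the global ordering uses at least $\chi_g(G[S])$ colors, which is precisely the monotonicity $\Gamma(G[S])\le\Gamma(G)$ that the paper implicitly relies on. With that correction, your proof goes through; the subset DP you sketch for computing $\chi_g(G[S])$ (or the paper's cited $O^*(2.246^{|S|})$ exact algorithm) and the $\binom{n}{\le n/r}\cdot c_0^{n/r}=O^*(c^{n\log r/r})$ running-time analysis are both fine.
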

\begin{proof} Let $G=(V,E)$ be any instance of
\grundy, and~$r$ any real value.  Here, we call \emph{minimal witness} of~$G$
achieving color~$k$, an induced subgraph~$W$ of~$G$ whose grundy number is~$k$,
such that all the induced subgraphs of~$W$ different from~$W$ have strictly
smaller grundy numbers.

Let $k$ be the grundy number of $G$ and $W$ be a minimal witness.
Let $C_1 \uplus C_2 \uplus \ldots \uplus C_k$ be a partition of $V(W)$ corresponding to the color classes in an optimal coloring.
Let $A_1,A_2, \ldots, A_{\lfloor \nicefrac{k}{r} \rfloor}$ be the $\lfloor \nicefrac{k}{r} \rfloor$ smallest (in terms of number of vertices) color classes among the $C_i$s.
Let $S=A_1 \uplus A_2 \uplus \ldots \uplus A_{\lfloor \nicefrac{k}{r} \rfloor}$.
Obviously $|V(W)| \leqslant n$, so $|S| \leqslant \nicefrac{n}{r}$.

The algorithm exhausts all the subset of $\nicefrac{n}{r}$ vertices.
For each subset of vertices, we run the exact algorithm running in time $O^*(2.246^n)$ on the corresponding induced subgraph.
Thus, the algorithm takes time $O^*(2^{\nicefrac{n \log r}{r}} 2.246^{\nicefrac{n}{r}})$.
As $|S| \leqslant \nicefrac{n}{r}$, the algorithm considers at some point $S$ or a superset of $S$.
We just have to show that the optimal grundy coloring of $S$ is an $r$-approximation.
Let us re-index the $A_j$s by increasing values of their index in the $C_i$s,
say $B_1,B_2, \ldots, B_{\lfloor \nicefrac{k}{r} \rfloor}$.  Then for each $i \in
[1,\lfloor \nicefrac{k}{r} \rfloor]$, we can color $B_i$ with color $i$ and achieve
color $\lfloor \nicefrac{k}{r} \rfloor$.~\qed 
\end{proof}

\section{Set Cover}

In this section we focus on the classical \sc problem, on inputs with $n$
elements and $m$ sets. In terms of $n$, a $\log r$-approximation is known in
time roughly $2^{\nicefrac{n}{r}}$. Moshkovitz~\cite{M12} gave a reduction from
$N$-variable \tsat which, for any $\alpha<1$ produces instances with universe
size $n = N^{O(\nicefrac{1}{\alpha})}$ and gap $(1-\alpha)\ln n$. Setting $\alpha =
\nicefrac{\ln(\nicefrac{n}{r})}{\ln n}$ translates this result to the terminology of our paper, and
shows a running time lower bound of $2^{(\nicefrac{n}{r})^c}$, for some $c>0$. Thus, even
though the picture for this problem is not as clear as for, say \is, it appears
likely that the known trade-off scheme is optimal.

We consider here the complexity of the problem as a function of $m$. This is a
well-motivated case, since for many applications $m$ is much smaller than $n$~\cite{nelson07}. Eventually, we would like to investigate whether the
known $r$-approximation in time $2^{\nicefrac{m}{r}}$ can be improved. Though we do not
resolve this question, we show that the approximability status of this problem
is somewhat unusual.

In polynomial time, the best known approximation algorithm has a guarantee of
$\sqrt{m}$~\cite{nelson07}. We first observe that the simple argument of this
algorithm can be extended to quasi-polynomial time.
\begin{theorem}\label{thm:scalc} For any $\delta>0$ there is an
$m^\delta$-approximation algorithm for \sc running in time $O^*(c^{(\log
n)^{\nicefrac{(1-\delta)}{\delta}}})$.
\end{theorem}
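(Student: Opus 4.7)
The plan is to iterate the $\sqrt{m}$-approximation argument of \cite{nelson07}. Recall the underlying dichotomy for Nelson's polynomial-time $\sqrt{m}$-approximation: one greedily picks sets of large ``residual size'' until every set covers at most $\sqrt{m}$ uncovered elements; at that point, if $|R|$ is the number of elements still uncovered, one has $\opt\ge|R|/\sqrt{m}$, so that even covering $R$ with one-set-per-element gives a $\sqrt{m}$-approximation of the residual sub-instance. Summing over the two phases yields the overall $O(\sqrt{m})$ ratio in polynomial time.

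To reach ratio $m^{\delta}$ for arbitrary $\delta>0$, I would apply this dichotomy $k:=\lceil(1-\delta)/\delta\rceil$ times with a geometric sequence of thresholds $s_{i}=m^{\,i/k}$, $i=1,\dots,k$. The algorithm proceeds in $k$ phases: phase $i$ greedily picks any set covering at least $s_{i}$ residual elements; once none exists, it enters phase $i{+}1$. When phase $k$ ends, every remaining set covers fewer than $m^{1/k}$ residual elements, hence $\opt\ge|R|/m^{1/k}$, and one finishes by covering $R$ trivially. A telescoping calculation over the $k$ phases yields an approximation ratio of $O(k\cdot m^{1/k})$, which is $O(m^{\delta})$ for the chosen $k$.

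The running time is where the quasi-polynomial overhead arises. To analyse a phase cleanly we do not know in advance which sets belong to $\opt$ at the corresponding threshold level; we therefore branch on $O(\log n)$ possibilities for the size of the residual instance at the end of each phase (up to constant factors there are only $O(\log n)$ distinct values of $|R|$ that matter). This gives a recursion tree of branching factor $O(\log n)$ and depth $k-1=(1-\delta)/\delta$, yielding total time $c^{(\log n)^{(1-\delta)/\delta}}$ for an appropriate constant~$c$, as claimed.

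The main obstacle is the tight interaction between the thresholds and the time analysis: the thresholds $s_{i}=m^{i/k}$ must be chosen so that the geometric product of per-phase ratios telescopes exactly to $m^{\delta}$, and simultaneously the $O(\log n)$-way branching at each of the $k-1$ transitions must not blow up to $(\log n)^{k}$ in the exponent. I would handle both issues by a clean induction on $k$: the base case $k=1$ is Nelson's polynomial-time $\sqrt{m}$-approximation, and the inductive step invokes the $(k{-}1)$-level algorithm on the residual instance after the outermost greedy/branching phase, contributing exactly one extra $\log n$ factor to the exponent.
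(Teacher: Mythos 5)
Your proposed proof takes a genuinely different route from the paper's, and unfortunately that route has several gaps.

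The paper's argument is a one-shot dichotomy, not an iterated threshold greedy. It distinguishes whether $m^{\delta} > \ln n$ or not. If $m^{\delta} > \ln n$, the ordinary greedy algorithm already has ratio $\ln n < m^{\delta}$. Otherwise $m < (\ln n)^{1/\delta}$, and one invokes the \emph{exponential}-time trade-off of Cygan, Kowalik and Wykurz (an $r$-approximation in time $2^{m/r}$), setting $r = m^{\delta}$; the running time becomes $2^{m^{1-\delta}} \le 2^{(\ln n)^{(1-\delta)/\delta}}$, which is exactly the claimed quasi-polynomial bound. Crucially, the quasi-polynomial overhead comes from brute force over the (now small) family of sets, not from branching in a multi-phase greedy.

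Concretely, the issues with your approach are the following. First, the description of Nelson's $\sqrt{m}$-approximation is not accurate: that algorithm is itself a dichotomy between the greedy $\ln n$-ratio and a $2^{\sqrt{m}}$-time brute force when $\sqrt{m} \le \ln n$, not a greedy that stops at a set-size threshold $\sqrt{m}$. Second, in your multi-phase scheme the number of sets picked in the \emph{first} phase is not controlled: each picked set covers $\geq s_1$ residual elements, so the count is at most $n/s_1$, which is not bounded against $\opt$; the telescoping bound you sketch applies only from phase two onward, where a lower bound on $\opt$ from the previous phase's stopping condition is available. Third, the time analysis is inconsistent with the claim: a recursion tree of branching factor $O(\log n)$ and depth $k$ has $(\log n)^{k}$ nodes, i.e.\ time $2^{O(k\log\log n)}\cdot\mathrm{poly}$, which is far smaller than $c^{(\log n)^{(1-\delta)/\delta}}$ — so even if the algorithm were correct, the bound you write down does not follow from the tree you describe, and conversely the much larger budget the theorem allows is not being used for anything. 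Finally, it is unclear what the branching is buying: once the thresholds $s_i$ are fixed, the phase-wise greedy is deterministic, so guessing the residual sizes serves no purpose in the argument as stated. The missing ingredient is precisely the appeal to a known sub-exponential $r$-approximation in time $2^{m/r}$, which is what lets the paper convert the ``$m$ is small'' branch into the required quasi-polynomial running time without any multi-phase machinery.
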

\begin{proof}
The argument is similar to that of~\cite{nelson07}. We distinguish two cases:
if $m^\delta > \ln n$, then we can run the greedy polynomial time algorithm and
return a solution with ratio better than $m^\delta$. So assume that $m^\delta <
\ln n$.

Now, run the $r$-approximation of~\cite{CyganKW09}, setting $r = m^\delta$. The
running time is (roughly) $2^{\nicefrac{m}{r}} = 2^{m^{1-\delta}}$. The result follows
since $m < (\ln n)^{\nicefrac{1}{\delta}}$.~\qed
\end{proof}
The above result is somewhat curious, since it implies that in quasi-polynomial
time one can obtain an approximation ratio better than that of the best known
polynomial-time algorithm. This leaves open two possibilities: either
$\sqrt{m}$ is not in fact the optimal ratio in polynomial time, or there is a
jump in the approximability of \sc from polynomial to quasi-polynomial time. We
remark that, though this is rare, there is in fact another problem which
displays exactly this behavior: for \textsc{Graph Pricing} the best
polynomial-time ratio is $\sqrt{n}$, while $n^\delta$ can be achieved in time
$O^*(c^{(\log m)^{\nicefrac{(1-\delta)}{\delta}}})$~\cite{ChalermsookLN13}. 

We do not settle this question, but observe that a combination of known
reductions for \sc, the ETH and the Projection Games Conjecture of~\cite{M12}
imply that the optimal ratio in polynomial time is $m^c$ for some 
$c>0$.  Thus, \sc is indeed likely to behave in a way similar to
\textsc{Graph Pricing}.  For Theorem~\ref{thm:schard} we essentially reuse the 
combination of reductions used in~\cite{CHK13} to obtain FPT
inapproximability results for \sc.
\begin{theorem} \label{thm:schard} Assume the ETH and the PGC. Then, there
exists a $c>0$ such that there is no $m^c$-approximation for \sc running in
polynomial time.
\end{theorem}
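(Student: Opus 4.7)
The plan is to re-use the reduction chain from \cite{CHK13} (originally used to derive FPT-inapproximability of \sc) and re-analyze its parameters in the polynomial-time regime, aiming for an approximation gap that is polynomial in $m$. There are essentially three ingredients: a PCP-style reduction from \sat to \textsc{Label Cover} supplied by the PGC, a Feige-type gap-preserving reduction from \textsc{Label Cover} to \sc, and, if necessary, a gap-amplification step by composition.

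First, under the PGC there is a polynomial-time reduction from $N$-variable \sat to a \textsc{Label Cover} instance of polynomial size with perfect completeness and soundness $\delta$, where $\delta$ can be taken inverse-polynomial in $N$ at the price of an appropriate polynomial blow-up. Under ETH, deciding the gap version of this \textsc{Label Cover} instance in polynomial time is impossible. Second, apply a Feige-style reduction with a partition system whose dimension is matched to $1/\delta$. This produces an \sc instance with $m = \mathrm{poly}(N)$ sets such that the optimum cover in the YES case is $\opt_Y$, while in the NO case the optimum is at least $m^c \cdot \opt_Y$, for some explicit $c>0$ depending on the PGC parameters and the partition-system dimension. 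A polynomial-time $m^c$-approximation would distinguish the two cases, yielding a polynomial-time decision procedure for \sat, contradicting ETH.

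The main obstacle is the second step: the standard one-shot Feige analysis only yields a logarithmic gap $\Omega(\log m)$, which is far weaker than $m^c$. To reach a polynomial-in-$m$ gap one must exploit the fact that the PGC tolerates inverse-polynomial soundness and, if needed, compose the reduction so that the resulting \sc gap grows polynomially while the instance itself stays of polynomial size. The delicate balance is ensuring both that the final instance is polynomially bounded (so that ETH's polynomial-time lower bound applies) and that the amplified gap, expressed in $m$, is $m^c$ for an explicit constant $c>0$. This is precisely the trade-off that \cite{CHK13} make for the FPT version, and a careful re-reading of their construction, tracking $m$ rather than the FPT parameter, gives the desired polynomial-time lower bound.
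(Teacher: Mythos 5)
The high-level idea you describe---re-use the \cite{CHK13} reduction and choose the Label Cover soundness so that the final \sc gap becomes polynomial in $m$---is indeed the right starting point, and it is the same reduction the paper invokes. But the way you close the argument contains a genuine misconception about the role of ETH.

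You write that the ``delicate balance is ensuring both that the final instance is polynomially bounded (so that ETH's polynomial-time lower bound applies),'' and you conclude that an $m^c$-approximation ``would distinguish the two cases, yielding a polynomial-time decision procedure for \sat.'' Neither statement matches how the reduction actually behaves, and this is precisely the subtle point the proof must handle. In Theorem~5 of~\cite{CHK13} the \emph{universe} of the \sc instance has size $n=2^{O(r)}\cdot\mathrm{poly}(N,r)$: the Feige-type partition systems force an exponential-in-$r$ blow-up in the number of elements. When you set $r=\sqrt{N}$ (as needed to make the gap $r$ a polynomial of $m=\mathrm{poly}(N)\cdot\mathrm{poly}(r)=\mathrm{poly}(N)$), the universe has size $2^{O(\sqrt{N})}$ and the reduction runs in time $2^{O(\sqrt{N})}$---sub-exponential in $N$, \emph{not} polynomial. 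Consequently a polynomial-time $m^c$-approximation would give a $2^{O(\sqrt{N})}$-time \sat solver, not a polynomial-time one; ruling this out is exactly why the assumption must be ETH rather than merely $\mathrm{P}\neq\mathrm{NP}$. Your insistence on keeping the whole instance polynomially bounded (via an extra ``composition'' step) is not what happens and would in fact over-constrain the construction: there is no known way to avoid the $2^{O(r)}$ factor in $n$, and the argument does not need to, because ETH tolerates sub-exponential reductions. So the gap in your proposal is that you have mislocated where and why ETH enters: you present the contradiction as ``polynomial reduction $+$ polynomial approximation $\Rightarrow$ polynomial \sat algorithm,'' whereas the correct logic is ``sub-exponential reduction $+$ polynomial approximation $\Rightarrow$ sub-exponential \sat algorithm,'' and only ETH (not $\mathrm{P}\neq\mathrm{NP}$) forbids the latter.
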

\begin{proof}
As mentioned, the proof reuses the reduction of~\cite{CHK13}, which in turn
relies on the ETH, the PGC and classical reductions for \sc. To keep the
presentation as short and self-contained as possible we simply recall Theorem 5
of~\cite{CHK13}, without giving a detailed proof (or a definition of the PGC).
\begin{theorem}~\cite{CHK13} If the Projection Games Conjecture holds, for any
$r>1$ there exists a reduction from \tsat of size $N$ to \sc with the following
properties:
\begin{itemize}
\item YES instances produce \sc instances where the optimal cover has size
$\beta$, NO instances produce \sc instances where the optimal cover has size at
least $r\beta$.
\item The size $n$ of the universe is $2^{O(r)}\mathrm{poly}(N,r)$.
\item The number of sets $m$ is $\mathrm{poly}(N)\cdot \mathrm{poly}(r)$.
\item The reduction runs in time polynomial in $n,m$.
\end{itemize}
\end{theorem}
Using the above reduction, we can conclude that there exists \emph{some}
constant $c$ such that $m^c$-approximation for \sc is impossible in polynomial
time, under the ETH. The constant $c$ depends on the hidden exponents of the
polynomials of the above reduction. The way to do this is to set $r$ to be some
polynomial of $N$, say $r=\sqrt{N}$. Then, the reduction runs in time
sub-exponential in $N$ (roughly $2^{\sqrt{N}}$) and produces a gap that is
polynomially related to $m$. If in polynomial time we could $r$-approximate the
new instance, this would give a sub-exponential time algorithm for \tsat.~\qed

\end{proof}

\bibliographystyle{abbrv}
\bibliography{subexp}

\begin{thebibliography}{10}

\bibitem{AGMGS10}
A.~Asadpour, M.~X. Goemans, A.~Madry, S.~{Oveis Gharan}, and A.~Saberi.
\newblock An {O(log n/ log log n)}-approximation algorithm for the asymmetric
  traveling salesman problem.
\newblock In {\em Proceedings of {SODA} 2010}, pages 379--389. {SIAM}, 2010.

\bibitem{BCEP13}
N.~Bourgeois, F.~D. Croce, B.~Escoffier, and V.~T. Paschos.
\newblock Fast algorithms for min independent dominating set.
\newblock {\em Discrete Applied Mathematics}, 161(4-5):558--572, 2013.

\bibitem{BEP09b}
N.~Bourgeois, B.~Escoffier, and V.~T. Paschos.
\newblock Approximation of min coloring by moderately exponential algorithms.
\newblock {\em Inf. Process. Lett.}, 109(16):950--954, 2009.

\bibitem{BEP09}
N.~Bourgeois, B.~Escoffier, and V.~T. Paschos.
\newblock Efficient approximation of min set cover by moderately exponential
  algorithms.
\newblock {\em Theor. Comput. Sci.}, 410(21-23):2184--2195, 2009.

\bibitem{BourgeoisEP09}
N.~Bourgeois, B.~Escoffier, and V.~T. Paschos.
\newblock Approximation of max independent set, min vertex cover and related
  problems by moderately exponential algorithms.
\newblock {\em Discrete Applied Mathematics}, 159(17):1954--1970, 2011.

\bibitem{ChalermsookLN13}
P.~Chalermsook, B.~Laekhanukit, and D.~Nanongkai.
\newblock Independent set, induced matching, and pricing: Connections and tight
  (subexponential time) approximation hardnesses.
\newblock In {\em {FOCS} 2013}, pages 370--379, 2013.

\bibitem{CHK13}
R.~H. Chitnis, M.~Hajiaghayi, and G.~Kortsarz.
\newblock Fixed-parameter and approximation algorithms: {A} new look.
\newblock In G.~Gutin and S.~Szeider, editors, {\em {IPEC} 2013}, volume 8246
  of {\em Lecture Notes in Computer Science}, pages 110--122. Springer, 2013.

\bibitem{CyganKW09}
M.~Cygan, L.~Kowalik, and M.~Wykurz.
\newblock Exponential-time approximation of weighted set cover.
\newblock {\em Inf. Process. Lett.}, 109(16):957--961, 2009.

\bibitem{CP10}
M.~Cygan and M.~Pilipczuk.
\newblock Exact and approximate bandwidth.
\newblock {\em Theor. Comput. Sci.}, 411(40-42):3701--3713, 2010.

\bibitem{CPW11}
M.~Cygan, M.~Pilipczuk, and J.~O. Wojtaszczyk.
\newblock Capacitated domination faster than \emph{O}({2}\({}^{n}\)).
\newblock {\em Inf. Process. Lett.}, 111(23-24):1099--1103, 2011.

\bibitem{dinur07}
I.~Dinur.
\newblock The {PCP} theorem by gap amplification.
\newblock {\em J. {ACM}}, 54(3):12, 2007.

\bibitem{DS14}
I.~Dinur and D.~Steurer.
\newblock Analytical approach to parallel repetition.
\newblock In {\em {STOC} 2014}, pages 624--633. {ACM}, 2014.

\bibitem{DFMR08}
R.~G. Downey, M.~R. Fellows, C.~McCartin, and F.~A. Rosamond.
\newblock Parameterized approximation of dominating set problems.
\newblock {\em Inf. Process. Lett.}, 109(1):68--70, 2008.

\bibitem{FriezeGM82}
A.~M. Frieze, G.~Galbiati, and F.~Maffioli.
\newblock On the worst-case performance of some algorithms for the asymmetric
  traveling salesman problem.
\newblock {\em Networks}, 12(1):23--39, 1982.

\bibitem{H93a}
M.~M. Halld{\'{o}}rsson.
\newblock Approximating the minimum maximal independence number.
\newblock {\em Inf. Process. Lett.}, 46(4):169--172, 1993.

\bibitem{impagliazzo01}
R.~Impagliazzo, R.~Paturi, and F.~Zane.
\newblock Which problems have strongly exponential complexity?
\newblock {\em J. Comput. Syst. Sci.}, 63(4):512--530, 2001.

\bibitem{K95}
V.~Kann.
\newblock Strong lower bounds on the approximability of some {NPO} pb-complete
  maximization problems.
\newblock In J.~Wiedermann and P.~H{\'{a}}jek, editors, {\em {MFCS '95}},
  volume 969 of {\em Lecture Notes in Computer Science}, pages 227--236.
  Springer, 1995.

\bibitem{KLS13}
M.~Karpinski, M.~Lampis, and R.~Schmied.
\newblock New inapproximability bounds for {TSP}.
\newblock In L.~Cai, S.~Cheng, and T.~W. Lam, editors, {\em {ISAAC} 2013},
  volume 8283 of {\em Lecture Notes in Computer Science}, pages 568--578.
  Springer, 2013.

\bibitem{kortsarz07}
G.~Kortsarz.
\newblock A lower bound for approximating grundy numbering.
\newblock {\em Discrete Mathematics {\&} Theoretical Computer Science}, 9(1),
  2007.

\bibitem{LY93}
C.~Lund and M.~Yannakakis.
\newblock The approximation of maximum subgraph problems.
\newblock In A.~Lingas, R.~G. Karlsson, and S.~Carlsson, editors, {\em
  ICALP93}, volume 700 of {\em Lecture Notes in Computer Science}, pages
  40--51. Springer, 1993.

\bibitem{M12}
D.~Moshkovitz.
\newblock The projection games conjecture and the {NP}-hardness of ln
  n-approximating set-cover.
\newblock In {\em {APPROX} 2012}, pages 276--287, 2012.

\bibitem{MR10}
D.~Moshkovitz and R.~Raz.
\newblock Two-query {PCP} with subconstant error.
\newblock {\em J. {ACM}}, 57(5), 2010.

\bibitem{nelson07}
J.~Nelson.
\newblock A note on set cover inapproximability independent of universe size.
\newblock {\em Electronic Colloquium on Computational Complexity {(ECCC)}},
  14(105), 2007.

\end{thebibliography}

\end{document}